\documentclass[nonacm,sigconf]{acmart}

\usepackage[a-2b]{pdfx}
\usepackage{balance}  
\usepackage{graphicx}
\usepackage[position=bottom]{subcaption}
\usepackage[ruled,linesnumbered,vlined,norelsize]{algorithm2e}
\usepackage[noend]{algpseudocode}
\usepackage{amsmath}
\usepackage{mathtools}
\usepackage{xcolor}
\usepackage[shortlabels,inline]{enumitem}
\usepackage{xspace}
\usepackage{setspace}
\usepackage{etoolbox}
\usepackage{lipsum}
\usepackage{array}
\usepackage{float}
\usepackage{listings}
\usepackage{tikz}
\usepackage{multirow}
\usepackage{pgfplots}
\pgfplotsset{compat=1.13}
\usepackage{macros}
\usepackage{tabto}

\makeatletter
\patchcmd{\@algocf@start}%
  {-1.5em}%
  {0pt}%
  {}{}%
\makeatother

\AtBeginDocument{%
  \providecommand\BibTeX{{%
    \normalfont B\kern-0.5em{\scshape i\kern-0.25em b}\kern-0.8em\TeX}}}

\setcopyright{acmcopyright}
\copyrightyear{2018}
\acmYear{2018}
\acmDOI{10.1145/1122445.1122456}

\acmConference[Woodstock '18]{Woodstock '18: ACM Symposium on Neural
  Gaze Detection}{June 03--05, 2018}{Woodstock, NY}
\acmBooktitle{Woodstock '18: ACM Symposium on Neural Gaze Detection,
  June 03--05, 2018, Woodstock, NY}
\acmPrice{15.00}
\acmISBN{978-1-4503-XXXX-X/18/06}
\acmDOI{10.1145/3517804.3524153}

\newtheorem{theorem}{Theorem}[section]
\newtheorem{lemma}[theorem]{Lemma}

\newtheorem{corollary}[theorem]{Corollary}
\newtheorem{definition}[theorem]{Definition}

\newcommand{\nop}[1]{}

\copyrightyear{2022}
\acmYear{2022}
\setcopyright{acmlicensed}
\acmConference[PODS '22] {Proceedings of the 41st ACM SIGMOD-SIGACT-SIGAI Symposium on Principles of Database Systems}{June 12--17, 2022}{Philadelphia, PA, USA.}
\acmBooktitle{Proceedings of the 41st ACM SIGMOD-SIGACT-SIGAI Symposium on Principles of Database Systems (PODS '22), June 12--17, 2022, Philadelphia, PA, USA}
\acmPrice{15.00}
\acmISBN{978-1-4503-9260-0/22/06} 

\settopmatter{printacmref=true}
\title{Fast Parallel Hypertree Decompositions in Logarithmic Recursion Depth}

\author{Georg Gottlob}
\orcid{0000-0002-2353-5230}
\author{Matthias Lanzinger}
\orcid{0000-0002-7601-3727}
\affiliation{%
  \institution{University of Oxford}
  \city{Oxford}
  \country{UK}
}
\email{georg.gottlob@cs.ox.ac.uk}
\email{matthias.lanzinger@cs.ox.ac.uk}

\author{Cem Okulmus}
\orcid{0000-0002-7742-0439}
\author{Reinhard Pichler}
\orcid{0000-0002-1760-122X}
\affiliation{%
  \institution{TU Wien}
  \city{Vienna}
  \country{Austria}
}

\email{cokulmus@dbai.tuwien.ac.at}
\email{pichler@dbai.tuwien.ac.at}

\begin{abstract}
Various classic reasoning problems with natural hypergraph representations are known to be tractable 
when a hypertree decomposition (HD) of low width exists. The resulting algorithms are attractive 
for practical use in fields like databases and constraint satisfaction.
However, algorithmic use of HDs relies on the difficult task of 
first computing a decomposition of the hypergraph underlying a given problem instance, 
which is then used to guide the algorithm 
for this particular instance. 
The performance of 
purely sequential methods for computing HDs is inherently limited, yet the problem is, theoretically, 
amenable to parallelisation.

In this paper we propose the first algorithm for computing hypertree decompositions that is well-suited 
for parallelisation. 
The newly proposed algorithm \logk requires only a logarithmic number of recursion levels and additionally 
allows for highly parallelised 
pruning of the search space by restriction to so-called balanced separators. 
We provide a detailed experimental evaluation over the HyperBench benchmark and demonstrate that \logk outperforms the current state-of-the-art significantly.
\end{abstract}

\begin{CCSXML}
<ccs2012>
       <concept_id>10002951.10002952.10003197.10010822</concept_id>
       <concept_desc>Information systems~Relational database query languages</concept_desc>
       <concept_significance>300</concept_significance>
       </concept>
   <concept>
       <concept_id>10002950.10003624.10003633.10003637</concept_id>
       <concept_desc>Mathematics of computing~Hypergraphs</concept_desc>
       <concept_significance>300</concept_significance>
       </concept>
   <concept>
       <concept_id>10010147.10010169.10010170</concept_id>
       <concept_desc>Computing methodologies~Parallel algorithms</concept_desc>
       <concept_significance>500</concept_significance>
       </concept>
 </ccs2012>
\end{CCSXML}

\ccsdesc[300]{Information systems~Relational database query languages}
\ccsdesc[300]{Mathematics of computing~Hypergraphs}
\ccsdesc[500]{Computing methodologies~Parallel algorithms}

\keywords{hypergraph decomposition, hypertree width, parallel algorithms}

\begin{document}

\maketitle

\section{Introduction}
\label{sect:Introduction}

Hypertree decompositions (HDs)~\cite{DBLP:journals/jcss/GottlobLS02} have
been demonstrated to be a valuable tool in a wide field of algorithmic applications. By way of structural decomposition of the hypergraph representation of problem instances, they induce tractable fragments for fundamental reasoning problems such as conjunctive query evaluation~\cite{DBLP:journals/jcss/GottlobLS02}, constraint satisfaction problems~\cite{DBLP:journals/ai/GottlobLS00}, and related counting problems~\cite{DBLP:journals/jcss/PichlerS13}. Other applications can be found in game theory, where problems such as determining Nash Equilibria~\cite{DBLP:journals/jair/GottlobGS05} and combinatorial auctions~\cite{DBLP:journals/jacm/GottlobG13} also 
become tractable in cases where HDs of bounded width exist.

In many of the listed cases, we do not only have theoretical tractability results but in fact know of algorithms which are suitable 
for practical applications. For example, in conjunctive query evaluation, HDs can be used for efficient reduction to an acyclic instance, 
which allows for linear-time solving using Yannakakis' algorithm~\cite{DBLP:conf/vldb/Yannakakis81}. Beyond practical algorithms, many of the listed problems are in fact known to be contained in the complexity class \nctwo~\cite{DBLP:journals/iandc/Cook85} if a bounded width HD exists~\cite{DBLP:journals/jacm/GottlobLS01,DBLP:journals/tcs/GottlobLS02,DBLP:conf/icdt/AfratiJRSU17}. Importantly, problems in \nctwo are considered to be highly parallelisable~\cite{DBLP:journals/iandc/Cook85} and thus the use of HDs in these areas can be even more attractive in parallelised and distributed scenarios.
The promising theoretical properties of hypertree decompositions have also been experimentally verified.
Implementations in specialised database systems have demonstrated 
the applicability of HDs in query evaluation by using them (and closely related variants), especially on difficult instances where current heuristic-based systems struggle~\cite{DBLP:conf/icde/GhionnaGGS07,DBLP:conf/cikm/GhionnaGS11,DBLP:journals/tods/AbergerLTNOR17}.

Despite these desirable properties and a demand for worst-case guarantees in various potential fields of application, the adoption of hypertree decompositions in practice has been slow. One crucial challenge that is limiting their more widespread use is the computational difficulty of constructing good HDs. In general, finding 
an HD for a given hypergraph $H$ and width at most $k$ 
is \np-hard and \wone-hard when parametrised by $k$~\cite{DBLP:journals/jcss/GottlobLS02}, but is tractable when $k$ is fixed, i.e., the problem is in \xp in the terminology of parametrised complexity. In fact, a significantly stronger upper bound  can be given. Finding 
an HD of fixed width is in the complexity class \logcfl~\cite{DBLP:journals/jcss/GottlobLS02} (contained in \nctwo) and therefore, in theory, highly parallelisable~\cite{DBLP:journals/iandc/Cook85}.
However, the theoretical parallelisability of the problem is demonstrated by construction of an appropriate Alternating Turing Machine~\cite{DBLP:conf/focs/ChandraS76} and no practical algorithm that allows for effective parallel computation of HDs is known. Here, we propose the first such algorithm.

\paragraph{Related Work}
HD computation 
has received significant attention recently. This is witnessed, for instance, 
by the development of the 
large benchmark data set HyperBench~\cite{10.1145/3440015}, novel
algorithmic approaches~\cite{DBLP:conf/ijcai/GottlobOP20,DBLP:conf/cp/FichteHLS18,10.1145/3440015},
and being the subject of a recent PACE competition~\cite{DBLP:conf/iwpec/DzulfikarFH19}. 
Moreover, a number of new theoretical results~\cite{DBLP:conf/mfcs/GottlobLPR20,JACMForthcoming} have been presented,  which 
have deepened our
understanding of the problem. Still, the development of a parallel algorithm for hypertree decomposition remains a critical open question.

The two state-of-the-art approaches for computing HDs, \detk~\cite{DBLP:journals/jea/GottlobS08} and \hdsmt~\cite{DBLP:conf/ijcai/SchidlerS21}, both rely on techniques that are inherently unsuitable for parallelisation. \detk is heavily reliant on extensive caching and would
therefore require excessive coordination between threads. In \hdsmt, the problem is encoded as an SMT instance and is therefore
limited by the lack of parallelisation strategies for SMT solvers.
While both algorithms perform well on current benchmarks, their lack of parallelisation ultimately limits them when it comes to solving large instances, i.e., finding HDs of large hypergraphs.
This situation is especially disappointing as, on the one hand,
 single-core performance apparently does not suffice to solve larger instances and, on the other hand, the problem is in fact highly parallelisable in theory.

Interestingly, in~\cite{DBLP:conf/ijcai/GottlobOP20}, a parallel algorithm \balgo is proposed for a slightly more general problem of computing  \emph{generalised} hypertree decompositions (GHDs)~\cite{DBLP:journals/jacm/GottlobMS09}.  In HDs, the so-called \emph{special condition} enforces certain constraints on the parent/ child nodes in the decomposition tree and the tree must therefore be treated as rooted. Crucially, this constraint is no longer enforced in GHDs and it is therefore also no longer necessary to consider the decomposition tree to be rooted. This additional degree of freedom is a key factor in the design of \balgo, where it ultimately allows for simple reassembly of individual decompositions of subproblems into a GHD of the full hypergraph.
However, this freedom comes at a
significant additional computational cost as the corresponding
decision problem for computing GHDs is \np-hard even for constant width 2~\cite{JACMForthcoming,DBLP:journals/jacm/GottlobMS09} (i.e., it is not even in \xp in the parametrised setting). 
In practice, this leads to an additional exponential factor in the algorithms' complexity in contrast to the complexity of algorithms for computing HDs.
In summary, current
approaches either are not amenable to effective parallelisation or
compute GHDs and therefore potentially cause exponential additional cost.
The goal of this paper is to bridge this gap and develop a parallel algorithm for computing hypertree decompositions.

\paragraph{Our Contributions}
As argued above, this goal is not achievable by a straightforward extension of
current approaches. The two principal algorithms for HDs are
inherently unsuited for parallelisation while the parallel algorithm
for GHDs fundamentally relies on the fact that GHDs are
unrooted. We therefore develop a new theoretical machinery which will allow us to construct HDs in an arbitrary order instead of being 
limited to a strict top-down or bottom-up construction of the HD. 
This machinery then allows us to build on some of the ideas of \balgo while avoiding the complexity of GHDs. Experimental evaluation demonstrates that the resulting algorithm combines the best of both worlds by scaling effectively 
with an increase of parallel threads while avoiding the exponential overhead of GHD computation.
Our main contributions are as follows:
\begin{itemize}
\item We develop a new theoretical framework of extended hypergraphs and their balanced separation, and we
show that extended hypergraphs always have a balanced separator. 
To actually find 
such balanced separators, it is crucial to apply a novel approach that 
determines pairs of parent and child nodes of an HD (rather than a single node) at a time. 
\item Based on these new results we propose a novel algorithm \logk, 
which  searches for  balanced separators at arbitrary positions in a potential HD. We argue that our algorithm is well-suited for parallelisation, 
in particular we prove a logarithmic upper bound on the recursion depth.
\item We identify a number of further optimisations of 
our basic algorithm and we incorporate them into  a parallelised reference implementation of \logk.
\item We compare the performance of \logk to \detk and \hdsmt through experiments over the HyperBench
benchmark~\cite{10.1145/3440015}. We observe that \logk outperforms the state-of-the-art 
significantly. Furthermore, we experimentally verify the parallel scaling behaviour of \logk.
\end{itemize}

\paragraph{Structure}
We formally introduce important concepts and notation in Section~\ref{sect:Preliminaries}. 
The theoretical framework of extended hypergraphs and their balanced separation is established in Section~\ref{sect:basics}. 
Building on this framework, we introduce the core ideas of the \logk algorithm and establish a logarithmic bound on its recursion depth in 
Section~\ref{sect:Algorithm:ShortForm}. The results of our empirical evaluation are presented in Section~\ref{sect:empirical}.
We conclude with Section~\ref{sect:Conclusion}.

Full details for the proposed optimisations are presented in Appendix~\ref{sect:improvements}. %

\section{Preliminaries}
\label{sect:Preliminaries}

\paragraph{ CQs, CSPs, and hypergraphs}
A \emph{hypergraph} $H = (V(H), E(H))$ is a pair consisting of a set of vertices $V(H)$ and a 
set of non-empty (hyper)edges $E(H) \subseteq 2^{V(H)}$.
We may assume w.l.o.g.\  that there are no isolated vertices, i.e., for each $v \in V(H)$, 
there is at least one edge $e \in E(H)$ with $v \in e$.
We can thus identify a hypergraph $H$ with its set of edges $E(H)$ with the understanding that 
$V(H) = \{ v \in e \mid e \in E(H) \} $.  A \emph{subhypergraph} $H'$ of $H$ 
is then simply a  subset of (the edges of) $H$. By slight abuse of notation we may thus 
write $H' \subseteq H$ with the understanding that $E(H') \subseteq E(H)$ and, hence, 
implicitly also $V(H') \subseteq V(H)$.
We are frequently dealing with sets of sets of vertices (e.g., sets of edges).
For $S \subseteq 2^{V(H)}$, we write 
$\bigcup S$ as a short-hand for the union of
such a set of sets, i.e., for $S = \{s_1, \dots, s_\ell\}$, we have $\bigcup S = \bigcup_
{i=1}^\ell s_i$.

\emph{Conjunctive Queries} (CQs), 
are arguably one of the most fundamental types of queries in the database world.
Similarly,  \emph{Constraint Satisfaction Problems} (CSPs) are among the most fundamental formalisms in Artificial Intelligence and
for modelling combinatorial problems. Formally, both are given by a first-order formula $\phi$ 
using only the connectives in $\{\exists, \land\}$ and 
disallowing $\{\forall,\vee,\neg\}$.  
Given such a formula $\phi$, 
the hypergraph $H_\phi$ corresponding to $\phi$ is defined as follows: 
$V(H_\phi) = \vars(\phi)$, i.e., the variables occurring in $\phi$;
and $E(H_\phi) = \{ \vars(a) \mid a$ is an atom in $\phi\}$. In the sequel, we will only concentrate on hypergraphs, 
with the understanding that all results ultimately apply to CQs and CSPs.

\noindent
\paragraph{Hypertree decompositions and hypertree width}
We introduce the used notation first: given a rooted tree 
$T = \langle N(T),E(T)\rangle$ with node set $N(T)$ and edge set $E(T)$, 
we write $T_u$ to denote the subtree of $T$ rooted at $u$, where 
$u$ is a node in $N(T)$. 
Analogously, we  write $\Tup$ to denote the subtree of $T$ induced by $N(T) \setminus N(T_u)$.
Intuitively, $T_u$ is the subtree of $T$ ``below'' $u$ and including $u$, while 
$\Tup$ is the subtree of $T$ ``above'' $u$.
By slight abuse of notation, we sometimes write $u \in T$ 
instead of $u \in N(T)$ to denote that $u$ is a node in $T$.
Below, we shall introduce node-labelling functions $\chi$ and $\lambda$, which 
assign to each node $u \in T$ a set of vertices or edges, respectively,  from some hypergraph $H$,
i.e.,  $\chi(u) \subseteq V(H)$
and 
$\lambda(u) \subseteq E(H)$. For a node-labelling function
$f$ with $f \in \{\chi,\lambda\}$ and a subtree $T'$ of $T$, we  define
$f(T')$ as $f(T') = \bigcup_{u' \in T'} f(u')$. 

We are now ready to recall the definitions of  hypertree decompositions  and 
hypertree width from \cite{DBLP:journals/jcss/GottlobLS02}: 
A {\em  hypertree decomposition\/} (HD) 
$\mathcal{D}$ of a hypergraph 
$H=(V(H),E(H))$ 
is a tuple $\mathcal{D} = \langle T, \chi, \lambda \rangle$, 
such that 
$T = \langle N(T),E(T)\rangle$ is a rooted tree, 
$\chi$ and $\lambda$  are node-labelling functions with 
$\chi \colon N(T) \rightarrow 2^{V(H)}$
 and
$\lambda \colon N(T) \rightarrow 2^{E(H)}$
and
the 
following conditions hold:

\begin{enumerate}[topsep=5pt]
 \item[(1)] for each $e \in E(H)$, there exists a node $u \in N(T)$ with $e \subseteq 
\chi(u)$;
 \item[(2)] for each $v \in V(H)$, the set $\{u \in N(T) \mid v \in \chi(u)\}$ is 
connected 
in $T$;
 \item[(3)] for each $u\in N(T)$, $\chi(u) \subseteq \bigcup \lambda(u)$;
\item[(4)]  for each $u\in N(T)$, $ \chi(T_u) \cap \big(\bigcup \lambda(u) \big) \subseteq \chi(u)$.
\end{enumerate}

The {\em width\/} of an HD 
$\mathcal{D} = \langle T, \chi, \lambda \rangle$
is the maximum size of the $\lambda$-labels over all nodes $u \in T$, 
i.e., $\width(\mathcal{D}) = \max_{u\in T} |\lambda(u)|$.
Moreover, the {\em hypertree width\/} of a hypergraph $H$, denoted $\hw(H)$, 
is the minimum width over all HDs of $H$.
Condition~(2) is called the ``connectedness condition''  and condition~(4) is 
referred to as the ``special condition'' in~\cite{DBLP:journals/jcss/GottlobLS02}. 
The set $\chi(u)$ is often referred to as the  ``bag'' at node $u$ and we will also 
call it the ``$\chi$-label'' of node $u$.  Analogously, the set $\lambda(u)$ will be referred to as the ``$\lambda$-label'' of $u$. 

If we drop the special condition from the above definition then we get so-called 
generalized hypertree decompositions (GHD). The width of a GHD is again defined as the maximum 
size of the $\lambda$-labels over all nodes $u \in T$ and the 
{\em generalized hypertree width\/} of a hypergraph $H$, denoted $\ghw(H)$, 
is the minimum width over all GHDs of $H$. The problem of checking if an HD of width $\leq k$ exists and,
if so, computing a concrete HD of width $\leq k$ is known to be feasible in 
polynomial time for arbitrarily chosen but fixed $k$ \cite{DBLP:journals/jcss/GottlobLS02}. 
In contrast, for GHDs, this problem has been shown to be NP-complete even if we fix $k =2$
\cite{JACMForthcoming,DBLP:journals/jacm/GottlobMS09}.

Throughout this paper, we will be dealing with a hypergraph $H$ and a tree $T$
of an HD of $H$.
To avoid confusion, we will consequently refer to the 
elements in 
$V(H)$ as {\em vertices\/} (of the hypergraph) and to the elements in $N(T)$ as 
the {\em nodes\/}
of $T$ (of the decomposition).

\section{Extended Hypergraphs and their Balanced Separation}
\label{sect:basics}

The key idea of our algorithm is to split the task of constructing an HD into subtasks of
constructing {\em parts\/} of the HD, which will be referred to as ``HD-fragments'' 
in the sequel. These HD-fragments can later be stitched together to form an HD of a given hypergraph. This splitting into
HD-fragments is realised by choosing a node $u$ of the HD and splitting the HD into one subtree
above node $u$ and possibly several subtrees below $u$. In order to keep track of how to combine
these subtrees later on, we introduce the notion of {\em special edges\/}. Intuitively, a
special edge is the set $\chi(u)$ of vertices for some node $u$ in the HD, and it is used to
keep track of the interface between the HD-fragment ``above'' node $u$ (we will denote this part of the HD as 
$\Tup$) and the HD-fragments at subtrees below node $u$. Conversely,
for each of the subtrees $T_{u_i}$ rooted at the child nodes $u_i$ of $u$, we have to keep track of
the interface to $\chi(u)$ in the form of a set $\Con$ of vertices, which is the
intersection $\chi(T_{u_i}) \cap \chi(u)$. 

At the heart of our decomposition algorithm 
in Section \ref{sect:Algorithm:ShortForm} 
will be a recursive
function \Decomp, which takes as input a subset $E'$ of the
edges $E(H)$, a set of special edges $\Sp$, and a set of vertices $\Con$. The goal of \Decomp is
to construct a fragment of an HD, such that every edge $e \in E'$ is covered by some node $u'$ in
the HD-fragment (i.e., $e \subseteq \chi(u')$), all special edges are covered by some leaf node of
this HD-fragment (hence, these are the interfaces to the HD-fragments ``below'') and $\Con$ must be
fully contained in $\chi(r)$ of the root $r$ of this HD-fragment (hence, this is the interface to
the HD-fragment ``above''). Formally, function \Decomp{} deals with {\em extended 
subhypergraphs\/} of $H$ in the following sense.

\begin{definition}[extended subhypergraph]
Let $H$ be a hypergraph. An  {\em extended subhypergraph\/} of $H$ is a triple 
 $\langle E', \Sp, \Con\rangle$ with the following properties:

\begin{itemize}[topsep=5pt]
\item  $E'$ is a subset of the edge set $E(H)$ of $H$;
\item $\Sp$ is a set of special edges, i.e.,  $\Sp \subseteq 2^{V(H)}$; 
\item $\Con$ is a set of  vertices, i.e., $\Con \subseteq V(H)$. 
\end{itemize}
\end{definition}
We now extend several crucial definitions introduced in \cite{DBLP:journals/jcss/GottlobLS02}
for hypergraphs to extended subhypergraphs.

\begin{definition}[connectedness, components]
\label{def:connectedness-components}
Let $H$ be a hypergraph, let $U \subseteq V(H)$ be a set of vertices, and 
let $H' = \langle E',\Sp,\Con \rangle$ be an extended subhypergraph of $H$.

\begin{itemize}[topsep=5pt]
\item We define {\em $[U]$-adjacency\/} as a binary relation on $E' \cup \Sp$ such that two
 (possibly special) edges $f_1,f_2 \in E' \cup \Sp$ are {\em $[U]$-adjacent\/}, if $
 (f_1 \cap f_2) \setminus U \neq \emptyset$ holds.
\item We define {\em $[U]$-connectedness\/} as the transitive closure of the {\em $
 [U]$-adjacency\/} relation.
\item A {\em $[U]$-component\/} of $H'$   is a maximally $[U]$-connected subset $C \subseteq
 E' \cup \Sp$ .
\end{itemize}
\end{definition}

\noindent
Let $S$ be a set of edges and special edges with $U = \bigcup S$. Then 
we will also use the terms $[S]$-connectedness and $[S]$-com\-po\-nents 
as a short-hand for $[U]$-connectedness and $[U]$-components, respectively.
Observe that the set $\Con$  plays no role in the above definition of connectedness and components.
This is in contrast to our definition of hypertree decompositions (HDs) of extended subhypergraphs,
which we give next.

\begin{definition}[hypertree decomposition]
\label{def:extendedHD}
Let $H$ be a hypergraph  and let 
$H' = \langle E',\Sp,\Con \rangle$ be an extended subhypergraph of  $H$.
A  hypertree decomposition (HD) of $H'$
is a tuple 
$\langle T, \chi, \lambda \rangle$, 
such that 
$T = \langle N(T),E(T)\rangle$ is a rooted tree, 
$\chi$ and $\lambda$  are node-labelling functions 
and
the 
following conditions hold:

\begin{enumerate}[topsep=5pt]
 \item[(1)] for each $u\in N(T)$, either \\
 a)  $\lambda(u) \subseteq E(H)$ and $\chi(u) \subseteq \bigcup \lambda(u)$ or \\
 b) $\lambda(u) = \{s\}$ for some $s \in \Sp$ and $\chi(u)  = s$;
 \item[(2)] each $f \in E' \cup \Sp$ is ``covered'' by some $u \in N(T)$, i.e.:\\
 a) if $f \in E'$, then  $f \subseteq \chi(u)$; \\
 b) if $f \in \Sp$, then $\lambda(u) =\{f\}$ and, hence, $\chi(u) =f$;
\item[(3)] for each $v \in \big(\bigcup E'\big) \cup \big(\bigcup \Sp\big)$, the set $\{u \in N(T)
\mid v \in \chi(u)\}$ is connected 
in $T$;
\item[(4)]  for each $u\in N(T)$, $ \chi(T_u) \cap \big(\bigcup \lambda(u) \big)\subseteq \chi(u)$;
\item[(5)] if $\lambda(u) = \{s\}$ for some $s \in \Sp$, then $u$ is a leaf  of $T$;
\item [(6)] the root $r$ of $T$ satisfies  $\Con \subseteq \chi(r)$.
\end{enumerate}
\end{definition}

\noindent
Clearly, $H$ can also be considered as an extended subhypergraph of itself by taking the triple
$\langle E(H), \emptyset, \emptyset\rangle$. Then the HDs of the extended
subhypergraph $\langle E(H), \emptyset, \emptyset\rangle$ and the HDs of hypergraph $H$ coincide. 

In \cite{DBLP:journals/jcss/GottlobLS02}, Definition 5.1, a normal form of HDs was introduced.
Below, in Definition~\ref{def:normalform},
we
will carry the notion of normal form over to HDs of extended subhypergraphs. To this end, it is
convenient to first define the set of (possibly special) edges {\em covered for the first time\/}
by some node or by some subtree of an HD.

\begin{definition}
\label{def:cov}
Let 
$H' = \langle E',\Sp,\Con \rangle$ be an extended subhypergraph of  
some hypergraph $H$ and let 
$\mathcal{D} = \langle T, \chi, \lambda \rangle$ be an HD of $H'$. 
For a node $u \in T$, we write $\cov(u)$ to denote the set of 
edges and special edges {\em covered for the first time\/} at $u$, i.e.:
$\cov(u) = \{f \in E' \cup \Sp \mid f \subseteq \chi(u)$ and for all ancestor nodes $u'$ of $u$, 
$f \not\subseteq \chi(u')$ holds$\}$. 
For a subtree $T'$ of $T$, we define $\cov(T')  = \bigcup_{u \in T'} \cov(u)$.
\end{definition}

\begin{definition}[normal form]
\label{def:normalform}
Let 
$H' = \langle E',\Sp$,  
$\Con \rangle$ be an extended subhypergraph of  
some hypergraph $H$ and let 
$\mathcal{D} = \langle T, \chi, \lambda \rangle$ be an HD of $H'$. 
We say that $\mathcal{D}$
is in {\em normal form}, if for every node $p$ in $T$ and every child node $c$ of $p$, 
the following properties hold: 

\begin{enumerate}[topsep=5pt]
\item  There is exactly one $[\chi(p)]$-component $C_p$ of $H'$ 
such that $C_p = \cov(T_c)$;
\item there exists $f \in C_p$ with $f \subseteq \chi(c)$, where 
$ C_p$ is the $[\chi(p)]$-component satisfying Condition 1;
\item $\chi(c) = \big(\bigcup \lambda(c)\big) \cap \big(\bigcup C_p\big)$, where again 
$C_p$ is the $[\chi(p)]$-component satisfying Condition 1.
\end{enumerate}
\end{definition}

\noindent
By the connectedness condition, 
the following property holds in any HD: 
if $C'$  is a $[\chi(p)]$-component of $H'$ with 
$C' \cap \cov(T_c)  \neq \emptyset$, then 
$C' \subseteq \cov(T_c)$ must hold. That is, $\cov(T_c)$ is the union of 
{\em one or several\/}
$[\chi(p)]$-components. 
Condition 1 of the normal form requires there to be 
{\em exactly one\/}
$[\chi(p)]$-component $C_p$ of $H'$ satisfying $C_p \subseteq \cov(T_c)$.

Condition 2 intuitively requires that some ``progress'' must be made by the labelling of node $c$.
Hence, in the first place, at least one vertex from $\bigcup C_p$ not already present in $\chi
(p)$ must occur in $\chi(c)$. By the connectedness condition, this is only possible if one edge $f$ from
$C_p$ occurs in $\lambda(c)$. Hence, by the special condition (i.e., condition (4) of the
definition of HDs), $f \subseteq \chi(c)$ must hold.

Condition 3 is the only place where we deviate from the normal form in \cite
{DBLP:journals/jcss/GottlobLS02}. The purpose of Condition 3 in \cite
{DBLP:journals/jcss/GottlobLS02} is to make sure that $\chi(c)$ is uniquely determined whenever
$\lambda(c)$, $\chi(p)$, and the $[\chi(p)]$-component $C_p$ from Condition 1 are known. However,
there also would have been other choices to achieve this goal. Our Condition 3 chooses $\chi(c)$ {\em minimally\/}. That
is, to ensure the special condition, $\chi(c)$ must contain all vertices from  $\bigcup \lambda
(c)$ that occur in $\chi(T_c)$. Since all edges in $C_p$ are covered at some node in $T_c$, all
vertices from $\big(\bigcup \lambda(c)\big) \cap \big(\bigcup C_p\big)$ must occur in $\chi(c)$. On
the other hand, there is no need to add further vertices to $\chi(c)$, since vertices not occurring
in $\bigcup \cov(T_c)$ can never violate the connectedness condition at node $c$ as long as we stick to our
strategy of choosing $\chi(u)$ minimally also for all nodes $u \in T_c$. In contrast, Condition~3
in 
\cite{DBLP:journals/jcss/GottlobLS02} chooses $\chi(c)$ {\em maximally\/}. That is, also all vertices
 in $\big(\bigcup \lambda(c)\big)$ that occur in $\chi(p)$ are added to $\chi(c)$. This deviation
 from the normal form in \cite{DBLP:journals/jcss/GottlobLS02} is crucial since, in our
 construction of an HD, we will be able to derive the possible sets $C_p$ 
 as soon as we have $\lambda(p)$ and $\lambda(c)$ but we will ``know'' $\chi
 (p)$ only much later in the algorithm.

We now carry over two key results from \cite{DBLP:journals/jcss/GottlobLS02}, 
whose proofs can be easily adapted to our setting of extended subhypergraphs and are therefore omitted
here.

\begin{theorem}[cf.\ \cite{DBLP:journals/jcss/GottlobLS02}, Theorem 5.4]
\label{theo:normalform}
Let $H'$ be an extended subhypergraph of  
some hypergraph $H$ and let 
$\mathcal{D}$ be an HD of $H'$ of width $k$. 
Then there exists an HD 
$\mathcal{D}'$ of $H'$ in normal form, such that\
$\mathcal{D}'$ also has width $k$. 
\end{theorem}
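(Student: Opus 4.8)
The plan is to adapt the classical normal-form transformation from Gottlob, Leone, and Scarcello to the extended setting. Since the statement explicitly references their Theorem 5.4 and remarks that the proof ``carries over almost literally,'' my strategy is to reproduce the inductive, top-down restructuring of the tree $T$ while carefully verifying that the new ingredients of extended subhypergraphs---namely the special edges $\Sp$ and the connectedness set $\Con$---do not disrupt any step. Concretely, I would start with an arbitrary HD $\mathcal{D} = \langle T, \chi, \lambda\rangle$ of $H'$ of width $k$ and produce $\mathcal{D}'$ by a process that walks down the tree and, at each parent node $p$, rearranges the subtrees hanging below $p$ so that each child subtree covers exactly one $[\chi(p)]$-component.

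First I would establish the key structural fact already hinted at in the excerpt: in any HD, for a node $p$ and child $c$, the set $\cov(T_c)$ is a union of $[\chi(p)]$-components. This follows from the connectedness condition (condition (3)): if two edges lie in the same $[\chi(p)]$-component, they share a vertex outside $\chi(p)$, and connectedness forces the nodes covering them to stay connected in $T$ without revisiting $\chi(p)$, so they must be covered within the same child subtree. With this in hand, the core construction replaces each child $c$ of $p$ by several children, one for each $[\chi(p)]$-component contained in $\cov(T_c)$, thereby enforcing Condition~1. For each resulting component $C_p$, I would then define the new $\chi$-label of the child minimally as $\big(\bigcup \lambda(c)\big) \cap \big(\bigcup C_p\big)$ to satisfy Condition~3, and argue that Condition~2 (genuine progress) holds because the component is reachable from $p$ only by introducing at least one fresh edge via $\lambda$, forcing some $f \in C_p$ with $f \subseteq \chi(c)$. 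Throughout, I would verify that the width is preserved, since we never enlarge any $\lambda$-label---we only split subtrees and shrink $\chi$-labels.

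The steps requiring the most care, and where the extended setting genuinely differs, concern conditions (5) and (6) of Definition~\ref{def:extendedHD}. For condition (5), I must ensure that the restructuring never moves a special-edge node away from being a leaf: since a node $u$ with $\lambda(u) = \{s\}$ for $s \in \Sp$ has $\chi(u) = s$ and must remain a leaf, I would treat these special leaves as atomic, never splitting them and always keeping them at the bottom of the appropriate child subtree determined by the component containing $s$. For condition (6), the root constraint $\Con \subseteq \chi(r)$ must survive the transformation; because the normal-form process only rearranges strictly below each parent and shrinks non-root bags, I would argue the root's $\chi$-label can be kept intact (or the root handled as a special base case), so $\Con$ remains covered. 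I would also re-examine connectedness (condition (3)) after shrinking bags to their minimal form, confirming that removing vertices not in $\bigcup\cov(T_c)$ cannot break any connected region, exactly as argued in the discussion following Definition~\ref{def:normalform}.

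The main obstacle I anticipate is not any single hard lemma but rather the bookkeeping of simultaneously maintaining all six conditions under the minimal-$\chi$ choice, which deviates from the maximal choice of the original paper. In particular, switching to minimal bags means I cannot simply invoke the original proof verbatim for the connectedness condition; I must independently check that a vertex dropped from $\chi(c)$ because it lies outside $\bigcup\cov(T_c)$ is never needed to connect $c$'s region to the rest of the tree. The cleanest way to discharge this is to prove, as an auxiliary invariant carried through the induction, that every vertex appearing in a bag of the transformed subtree $T_c$ actually occurs in some edge of $C_p$, so that its occurrences form a connected subtree confined to $T_c$ and consistent with $p$. Once this invariant is in place, conditions (1)--(6) follow by the same case analysis as in the classical argument, completing the proof.
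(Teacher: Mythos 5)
Your proposal matches the paper's intent exactly: the paper omits the proof, stating that the argument of Theorem 5.4 in Gottlob, Leone, and Scarcello carries over almost literally to extended subhypergraphs, and your sketch is precisely that transfer, with the same points of care the paper itself flags in the discussion around Definition~\ref{def:normalform} (special edges as atomic leaves, the root condition for $\Con$, and the switch to minimal $\chi$-labels, which the paper justifies by the same observation that vertices outside $\bigcup\cov(T_c)$ cannot violate connectedness). No substantive divergence from the paper's approach.
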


\begin{lemma}[cf.\ \cite{DBLP:journals/jcss/GottlobLS02}, Lemma 5.8]
\label{lemma:chi-vs-lambda}
Let  $H'$ be an exten\-ded subhypergraph of  
some hypergraph $H$ and let 
$\mathcal{D} = \langle T, \chi$, $\lambda \rangle$ be an HD in normal form of $H'$. 
Moreover, let $p, c$  be nodes in $T$ such that $p$ is the parent of $c$
and let  $C_c \subseteq C_p$ for some 
$[\chi(p)]$-component $C_p$ of $H'$. 
Then the following equivalence holds: 
$C_c$ is a $[\chi(c)]$-component  of $H'$ if and only if 
$C_c$ is a $[\lambda(c)]$-component  of $H'$.
\end{lemma}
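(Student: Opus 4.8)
The plan is to show that the normal form makes the vertex set $\chi(c)$ and the edge label $\lambda(c)$ induce exactly the same connectivity structure on the component $C_p$, and then to argue that this structure cannot reach outside $C_p$. Throughout, I take $C_p$ to be the $[\chi(p)]$-component associated with $c$ by Condition~1 of the normal form, so that $C_p = \cov(T_c)$ and Condition~3 gives $\chi(c) = \big(\bigcup \lambda(c)\big) \cap \big(\bigcup C_p\big)$; this is the component to which Condition~3 refers, and it is the only reading under which the hypotheses are usable.

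The first step is a purely set-theoretic observation about adjacency inside $C_p$. For any two (possibly special) edges $f_1, f_2 \in C_p$ we have $f_1 \subseteq \bigcup C_p$ and $f_2 \subseteq \bigcup C_p$, hence $f_1 \cap f_2 \subseteq \bigcup C_p$. Using Condition~3, a vertex of $f_1 \cap f_2$ lies in $\chi(c)$ exactly when it lies in $\bigcup \lambda(c)$, so that $(f_1 \cap f_2)\setminus \chi(c) = (f_1 \cap f_2)\setminus \bigcup \lambda(c)$. Consequently $[\chi(c)]$-adjacency and $[\lambda(c)]$-adjacency coincide on pairs drawn from $C_p$, and therefore so do $[\chi(c)]$-connectedness and $[\lambda(c)]$-connectedness when restricted to $C_p$.

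The step I expect to be the main obstacle is confining both relations to $C_p$, i.e.\ showing that no edge of $C_p$ is $[\chi(c)]$- or $[\lambda(c)]$-adjacent to an edge outside $C_p$. The key auxiliary claim is $\chi(p)\cap\big(\bigcup C_p\big)\subseteq \chi(c)$: any vertex $v$ in $\chi(p)\cap\bigcup C_p$ occurs, since $\bigcup C_p = \bigcup \cov(T_c)$, in some node of $T_c$, and since it also occurs in $\chi(p)$ with $p \notin T_c$, the connectedness condition (applied along the path through $c$) forces $v \in \chi(c)$. Now take $f \in C_p$ and $g \in (E'\cup\Sp)\setminus C_p$. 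Maximality of the $[\chi(p)]$-component $C_p$ gives $f\cap g\subseteq \chi(p)$, and $f\subseteq\bigcup C_p$ gives $f\cap g\subseteq \bigcup C_p$; combining with the claim yields $f\cap g\subseteq \chi(c)\subseteq \bigcup\lambda(c)$. Hence $f$ and $g$ are adjacent under neither relation, so every $[\chi(c)]$- and every $[\lambda(c)]$-connected set meeting $C_p$ stays inside $C_p$.

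Putting the two steps together, the maximal $[\chi(c)]$-connected subsets contained in $C_p$ are exactly the maximal $[\lambda(c)]$-connected subsets contained in $C_p$: connectivity agrees inside $C_p$ and neither relation crosses its boundary, so maximality within $E'\cup\Sp$ is equivalent to maximality within $C_p$ for both relations. Since $C_c\subseteq C_p$ by assumption, $C_c$ is a $[\chi(c)]$-component of $H'$ if and only if it is a $[\lambda(c)]$-component of $H'$, which is the desired equivalence.
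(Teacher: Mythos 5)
Your proof is correct, and since the paper omits this proof entirely (deferring to Lemma 5.8 of \cite{DBLP:journals/jcss/GottlobLS02}, whose argument ``carries over almost literally''), your reconstruction is exactly the standard route: Condition~3 of the normal form makes $\chi(c)$ and $\bigcup\lambda(c)$ agree on $\bigcup C_p$, and the connectedness condition confines both adjacency relations to $C_p$. Your explicit remark that $C_p$ must be read as the component $\cov(T_c)$ from Condition~1 is also the intended reading and matches how the lemma is used in Corollary~\ref{cor:chi-vs-lambda} and in the algorithm.
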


Note that our deviation from \cite{DBLP:journals/jcss/GottlobLS02}
in the definition of the $\chi$-label of nodes in a normal-form HD
is inessential, since the ``downward'' components in an HD are not affected
by adding or removing vertices from the parent node to the $\chi$-label 
of the child node.
However, for our purposes, we need a slightly stronger version of the above lemma: 
recall that the HD construction
in \cite{DBLP:journals/jcss/GottlobLS02}  proceeds in a strict
top-down fashion. Hence, when dealing with $\lambda(c)$, 
the bag $\chi(p)$ is already known. This is due to the fact
that, initially at the root $r$, we have $\chi(r) = \bigcup \lambda(r)$
by the special condition. And then, whenever $\lambda(c)$ is determined
and $\chi(p)$ plus a $[\chi(p)]$-component are already known, 
also $\chi(c)$ can be computed. However, in our HD algorithm, 
which ``jumps into the middle'' of the HD to be constructed, 
we only have $\lambda(p)$ (but not  $\chi(p)$) available when 
determining $\lambda(c)$.
Hence, we need to slightly extend the above lemma to the following corollary, which follows
from 
Lemma \ref{lemma:chi-vs-lambda}
by an easy induction argument over the distance from the 
root of the HD:

\begin{corollary}
\label{cor:chi-vs-lambda}
Let  $H'$ be an exten\-ded subhypergraph of  
some hypergraph $H$ and let 
$\mathcal{D} = \langle T, \chi, \lambda \rangle$ be an HD in normal form of $H'$. 
Moreover, let $p, c$  be nodes in $T$ such that $p$ is the parent of $c$
and let  $C_c \subseteq C_p$ for some 
$[\lambda(p)]$-component $C_p$ of $H'$. 
Then the following equivalence holds: 
$C_c$ is a $[\chi(c)]$-com\-ponent  of $H'$ if and only if 
$C_c$ is a $[\lambda(c)]$-component  of $H'$.
\end{corollary}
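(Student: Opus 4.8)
The plan is to derive Corollary~\ref{cor:chi-vs-lambda} from Lemma~\ref{lemma:chi-vs-lambda} by exploiting that the two statements have \emph{identical} conclusions and differ only in their hypotheses: the lemma assumes $C_c$ to be contained in some $[\chi(p)]$-component, whereas the corollary assumes $C_c$ to be contained in some $[\lambda(p)]$-component. Hence it suffices to show that the corollary's hypothesis implies the lemma's, i.e.\ that every $[\lambda(p)]$-component is contained in a $[\chi(p)]$-component; the desired equivalence then follows by a direct appeal to the lemma.

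First I would record the containment $\chi(p) \subseteq \bigcup \lambda(p)$. Since $c$ is a child of $p$, node $p$ is not a leaf, so by Condition~(5) of Definition~\ref{def:extendedHD} the label $\lambda(p)$ cannot be a single special edge; thus case~(1a) of that definition applies and yields $\chi(p) \subseteq \bigcup \lambda(p)$ directly. (The inclusion holds in case~(1b) as well, where in fact $\chi(p) = \bigcup \lambda(p)$, but this case does not arise here.)

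The key step is then a simple monotonicity observation on the adjacency relation. Writing $U_\chi = \chi(p)$ and $U_\lambda = \bigcup \lambda(p)$, the inclusion $U_\chi \subseteq U_\lambda$ gives $(f_1 \cap f_2) \setminus U_\lambda \subseteq (f_1 \cap f_2) \setminus U_\chi$ for any two (possibly special) edges $f_1, f_2 \in E' \cup \Sp$. Consequently $[\lambda(p)]$-adjacency implies $[\chi(p)]$-adjacency, and passing to the transitive closure, $[\lambda(p)]$-connectedness implies $[\chi(p)]$-connectedness. Since the elements of a $[\lambda(p)]$-component are pairwise $[\lambda(p)]$-connected, they are therefore pairwise $[\chi(p)]$-connected and so all lie in one and the same $[\chi(p)]$-component. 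Thus each $[\lambda(p)]$-component $C_p$ is contained in some $[\chi(p)]$-component $C_p'$.

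Finally, from $C_c \subseteq C_p$ and $C_p \subseteq C_p'$ I obtain $C_c \subseteq C_p'$ with $C_p'$ a $[\chi(p)]$-component, which is exactly the hypothesis needed to invoke Lemma~\ref{lemma:chi-vs-lambda}. The lemma then delivers the equivalence ``$C_c$ is a $[\chi(c)]$-component iff $C_c$ is a $[\lambda(c)]$-component'', completing the argument. I do not anticipate any genuine obstacle: the whole proof rests on the elementary refinement of $[\chi(p)]$-components by $[\lambda(p)]$-components, which is forced by the containment $\chi(p) \subseteq \bigcup \lambda(p)$ coming from Condition~(1) together with the special condition.
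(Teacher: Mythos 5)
Your overall strategy---reducing the corollary to Lemma~\ref{lemma:chi-vs-lambda} by relating $[\lambda(p)]$-components to $[\chi(p)]$-components---is the same as the paper's, and your monotonicity observation is correct as far as it goes: $\chi(p)\subseteq\bigcup\lambda(p)$ holds in either case of Condition~(1), so $[\lambda(p)]$-adjacency implies $[\chi(p)]$-adjacency and every $[\lambda(p)]$-component is contained in some $[\chi(p)]$-component. The gap is that containment in \emph{some} $[\chi(p)]$-component is not what Lemma~\ref{lemma:chi-vs-lambda} actually supports. Although the lemma is phrased with a bare existential, it is (like its source, Lemma~5.8 of \cite{DBLP:journals/jcss/GottlobLS02}) a statement about the particular $[\chi(p)]$-component $C_p=\cov(T_c)$ singled out by Condition~1 of the normal form: its proof rests on Condition~3, $\chi(c)=\bigl(\bigcup\lambda(c)\bigr)\cap\bigl(\bigcup C_p\bigr)$, which ties $\chi(c)$ to that component and to no other. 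For edges $f_1,f_2$ lying in a different $[\chi(p)]$-component, nothing forces $(f_1\cap f_2)\setminus\chi(c)$ and $(f_1\cap f_2)\setminus\bigcup\lambda(c)$ to be simultaneously empty or nonempty, so the equivalence of $[\chi(c)]$- and $[\lambda(c)]$-components is not guaranteed there. Instantiating the lemma with an arbitrary containing component $C_p'$, as you do in your final step, is therefore not licensed by the lemma's actual content, only by its (over-generous) literal wording.

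What the paper establishes instead---and what is genuinely needed---is the \emph{identity}: the downward $[\lambda(p)]$-component coincides with the $[\chi(p)]$-component $\cov(T_c)$. It obtains this by induction on the distance of $p$ from the root: at the root the special condition forces $\chi(r)$ and $\bigcup\lambda(r)$ to induce the same components, and the inductive step is Lemma~\ref{lemma:chi-vs-lambda} itself applied one level higher up the tree. Your refinement argument delivers one inclusion of that identity essentially for free, but the reverse inclusion is where the induction does its real work; without it the argument does not close. Keeping your first two steps and replacing the final appeal to the lemma by the paper's root-to-$p$ induction would repair the proof.
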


As mentioned before, in our HD construction, we ``jump into the middle'' of the HD to be 
constructed. The motivation for this deviation from a strict top-down construction is that 
we want to split the work of recursively constructing fragments of the HD into pieces 
with a guaranteed upper bound on the size. Formally, we are thus aiming at a 
{\em balanced separator\/} of the HD. This concept was
already studied in \cite{DBLP:journals/ejc/AdlerGG07}, and it was shown that 
HDs always have a balanced separator. In \cite{10.1145/3440015}, balanced
separators were used to design an algorithm for GHD computation. Below, we 
formally define balanced separators for our notion of extended subhypergraphs and we show that
in an HD, a balanced separator always exists.

\begin{definition}[balanced separators]
\label{def:balsep}
Let  $H'$ be an exten\-ded subhypergraph of  
some hypergraph $H$ and let 
$\mathcal{D} = \langle T, \chi, \lambda \rangle$ be an HD of $H'$. 
A node $u$ of $T$ is a {\em balanced separator\/}, if the following 
holds:

\begin{itemize}[topsep=5pt]
\item  for every subtree $T_{u_i}$ rooted at a child node $u_i$ of $u$, 
we have $|\cov(T_{u_i})| \leq \frac{|E'| +|\Sp|}{2}$ and 
\item  $|\cov(\Tup)| < \frac{|E'| + |\Sp|}{2}$.
\end{itemize}
\end{definition}

Intuitively, this means that none of the subtrees ``below'' $u$ covers 
more than half of the edges of $E' \cup \Sp$ and 
the subtree ``above' $u$ even covers less than half of the edges of $E' \cup \Sp$.

\begin{lemma}
\label{lem:existence-balsep}
Let  $H'$ be an exten\-ded subhypergraph of  
some hypergraph $H$ and let 
$\mathcal{D} = \langle T, \chi, \lambda \rangle$ be an HD of $H'$. 
Then there exists a balanced separator in $\mathcal{D}$.
\end{lemma}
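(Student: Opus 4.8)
The plan is to find the balanced separator by a standard ``heaviest child'' walk through the tree, treating $\cov$ as a weight function on the edges and special edges. First I would observe the key structural fact that makes such an argument work: for any node $u$ with children $u_1, \dots, u_m$, the sets $\cov(\Tup)$, $\cov(u)$, $\cov(T_{u_1}), \dots, \cov(T_{u_m})$ form a partition of $E' \cup \Sp$. Indeed, every $f \in E' \cup \Sp$ is covered (by condition (2) of Definition~\ref{def:extendedHD}), so it is covered for the first time at some unique node $w$; that $w$ lies either strictly above $u$, at $u$ itself, or in exactly one of the subtrees $T_{u_i}$, and these three cases are mutually exclusive. In particular $|\cov(\Tup)| + |\cov(u)| + \sum_{i=1}^m |\cov(T_{u_i})| = |E'| + |\Sp|$, which I abbreviate by $n$.

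Next I would run the following descent. Start at the root $r$; here $\cov(\Tup) = \emptyset$, so $|\cov(\Tup)| = 0 < n/2$ trivially. Now maintain the invariant that at the current node $u$ we have $|\cov(\Tup)| < n/2$. If $u$ is already a balanced separator we are done. Otherwise some child $u_j$ violates the first bullet of Definition~\ref{def:balsep}, i.e.\ $|\cov(T_{u_j})| > n/2$; note there can be at most one such child, since the subtree cover sets are disjoint and any two of weight $> n/2$ would already exceed $n$. I would move to this unique ``heavy'' child $u_j$ and check that the invariant is preserved: by the partition identity above, $\cov(\overline{T}_{u_j}^+) = \cov(\Tup) \cup \cov(u) \cup \bigcup_{i \neq j} \cov(T_{u_i})$, all of which is disjoint from $\cov(T_{u_j})$, so $|\cov(\overline{T}_{u_j}^+)| = n - |\cov(T_{u_j})| < n - n/2 = n/2$. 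Hence the invariant $|\cov(\Tup)| < n/2$ continues to hold at $u_j$.

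This walk cannot cycle, since each step moves strictly downward in the finite tree $T$, so it must terminate at some node $u^\ast$. Termination can only occur because no child of $u^\ast$ is heavy, which is precisely the statement that every child subtree satisfies $|\cov(T_{u_i})| \leq n/2$; combined with the maintained invariant $|\cov(\overline{T}_{u^\ast}^+)| < n/2$, this is exactly Definition~\ref{def:balsep}. A leaf is a degenerate terminating case, as it vacuously has no heavy child. I therefore expect the only real care to be in the bookkeeping: verifying the disjointness/partition claim for $\cov$ and checking that ``heavy child'' is unique and that moving to it keeps the above-cover strictly below $n/2$. None of these is a serious obstacle, so the main step to get right is simply the clean statement and use of the partition identity for $\cov$ together with the strictness of the second bullet.
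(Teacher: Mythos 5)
Your proposal is correct and follows essentially the same argument as the paper's proof: a descent from the root into the unique ``heavy'' child, using disjointness of the $\cov$-sets to conclude that the part above the new node covers strictly fewer than half of the (special) edges, terminating because the tree is finite. You merely make the partition identity and the maintained invariant more explicit than the paper does; the underlying reasoning is identical.
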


\begin{proof}[Proof of Lemma \ref{lem:existence-balsep}]
We show that, given an arbitrary HD, we 
can always find a balanced separator as follows:
Initially, we set $u = r$ for the root node $r$ of $T$ and distinguish two cases:
if $|\cov(T_{u_i})| \leq \frac{|E'| +  |\Sp|}{2}$ holds 
for every subtree $T_{u_i}$ rooted at a child node $u_i$ of $u$, 
then $u$ is a balanced separator and we are done. 
Otherwise, there exists a child node $u_i$ of $u$ such that 
$|\cov(T_{u_i})| > \frac{|E'| + |\Sp|}{2}$ holds for the subtree $T_{u_i}$ rooted at $u_i$. 
Of course, there can exist only one such child node $u_i$. Moreover, 
by $\cov(\Tupi) \cap \cov(T_{u_i}) = \emptyset$, 
we have $|\cov(\Tupi)| < \frac{|E'| + |\Sp|}{2}$.

Now set $u = u_i$ and repeat the case distinction: 
if $|\cov(T_{u_i})| \leq \frac{|E'| + |\Sp|}{2}$ holds 
for every subtree $T_{u_i}$ rooted at a child node $u_i$ of $u$, 
then $u$ is a balanced separator and we are done. 
Otherwise, there exists a child node $u_i$ of $u$ such that 
$|\cov(T_{u_i})| > \frac{|E'| + |\Sp|}{2}$ holds for the subtree $T_{u_i}$ rooted at $u_i$.
Again, there can only be one such $u_i$. So we set $u = u_i$ and
iterate the same considerations. 
This process is guaranteed to terminate since, eventually, we will reach a leaf node of $T$.
\end{proof}

\section{The \logk Algorithm}
\label{sect:Algorithm:ShortForm}

\let\oldnl\nl%
\newcommand{\nonl}{\renewcommand{\nl}{\let\nl\oldnl}}%

\begin{algorithm}
\DontPrintSemicolon
\SetKwInOut{KwPara}{Parameter}
\SetKwInput{KwType}{Type}

\KwType{Comp=($E$: Edge set, $\Sp$: Special Edge set)}

\KwIn{$H$: Hypergraph}
\KwPara{$k$: width parameter}
\KwOut{\textbf{true} if \emph{hw} of $H$ $\leq k$, else \textbf{false}}

  \SetKwFunction{algo}{Decomp}  

  \Begin{
   $H_{comp} \coloneqq$ Comp($E$: $H$, $\Sp$: $\emptyset$) \\
  \ForEach(\Comment{\textbf{RootLoop}}){
              $\lambda_r \subseteq H $\! s.t.\,  $ 1 \leq |\lambda_r| \leq k$}{
             $comps_r \coloneqq $ $[\lambda_r ]$-components of $H_{comp}$

                    \ForEach{$y \in comps_r$} {

                    $Conn_y \coloneqq V(y) \cap \bigcup \lambda_r$ \;
                      \If {\textbf{not}(\algo{$y$, $Conn_y $ })} {

                              \textbf{continue RootLoop}  \Comment{reject this root} 
                     }
                 }
        \textbf{return true}  \;
  }

    \textbf{return false} \Comment{exhausted search space} 
  }
  \SetKwProg{myalg}{function}{}{}
    \myalg{\algo{$H'\!$: Comp, $\Con$: Vertex set}} {
     \uIf(\Comment{\textbf{Base Cases}}) {
              $|\text{$H'\!.E$}| \leq k $ \textbf{\emph{and}} $|H'\!.\Sp| = 0$  } {
                            \textbf{return true} \;
     }
     \ElseIf {$|H'\!.E| = 0 $  \textbf{\emph{and}}  $|H'\!.\Sp|  = 1$} {
    \textbf{return true} \;
     }

    \ForEach(\Comment{\textbf{ParentLoop}}){
              $\lambda_p \subseteq H $\! s.t.\,  $ 1 \leq  |\lambda_p| \leq k$}  { 

             $comps_p \coloneqq $ $[\lambda_p ]$-components of $H'$ \;

             \uIf{$\exists i $ s.t. $| comps_p[i] |  >  \frac{|H'\!|}{2}$}  
             {$comp_{\low}\coloneqq comps_p[i]$ \Comment{found child comp.} 
             }                  
             \Else { 
                \textbf{continue ParentLoop}  \;
             }
                
             \If{$V(comp_{\low}) \cap Conn \not \subseteq \bigcup \lambda_p $   }  {
                    \textbf{continue ParentLoop} \Comment{connect. check} 
             }  
                
             \ForEach(\Comment{\textbf{ChildLoop}}){
                            $\lambda_c \subseteq H $\! s.t.\,  $ 1 \leq |\lambda_c| \leq k$}  {
                            $\chi_c \coloneqq   \bigcup \lambda_c \cap V(comp_{\low}) $\; 
                            \If{$V(comp_{\low}) \cap \bigcup\lambda_p \not \subseteq  \chi_c $   }  {
                                   \textbf{continue ChildLoop} \Comment{connect. check} 
                            }  

                       $comps_c \coloneqq $  $[\chi_c]$-components of  $comp_{\low}$  \;
                        \If{$\exists i $ s.t. $| comps_c[i] |  >  \frac{|H'\!|}{2}$}    {
                           \textbf{continue ChildLoop}  \;
                          }

                        \ForEach{$x \in comps_c$} {

                           $\Con_x \coloneqq V(x) \cap \chi_c$ \;
                            \If {\textbf{not}(\algo{$x$, $\Con_x $ })} {

                                     \textbf{continue ChildLoop}  \Comment{reject child}
                            }
                        }

                       $comp_{\up} \coloneqq H'\! \setminus comp_{\low} $  \Comment{pointwise diff.}

                       $comp_{\up}.\Sp = comp_{\up}.\Sp\cup \{ \chi_c \} $\;                       

                        \If{\textbf{not}(\algo{$comp_{\up}$, $\Con$})} {
                               \textbf{continue ChildLoop} \Comment{reject child} 
                        }
                           \textbf{return true} \Comment{$hw$ of $H'\! \leq k$} 
                
             }
    }
    \textbf{return false} \Comment{exhausted search space} 
       
}

\caption{\logk}
\label{alg:parHD}
\end{algorithm}

We now describe the main ideas of 
algorithm \logk. A pseudo-code description of \logk is shown in Algorithm~
\ref{alg:parHD}.

Algorithm  \logk\ 
aims at constructing an HD in normal form according to Definition~\ref{def:normalform}
of width $\leq k$ for a given hypergraph $H$ and integer $k \geq 1$. 
The task of constructing an HD is split
into subtasks that can then be processed in parallel.  
At the heart of \logk is the 
recursive function \Decomp: it takes as input
an extended subhypergraph $H'$ of $H$ in the form of parameter $H'$ of $H$ (with 
two fields $H'.E$ and $H'.\Sp$ for the sets of edges
and special edges of $H'$, respectively) plus
parameter $\Con$ for the interface of the HD-fragment to be constructed with the parts ``above'' in the final HD. It
returns ``true'' if an HD-fragment of width $\leq k$ of $H'$ exists and
``false'' otherwise. The top-level calls to function \Decomp  (line~7)  
are from the main program of  \logk which, 
in a loop (lines 3 -- 9), 
searches for the $\lambda$-label of the root node $r$ of the desired HD of $H$. By the special condition,
we have $\chi(r) = \bigcup \lambda(r)$. Hence,
the $[\lambda(r)]$-components (computed at line 4) coincide with the 
$[\chi(r)]$-components. Function \Decomp is called (on line 7) 
for each of the extended subhypergraphs of $H$ corresponding to 
the $[\lambda(r)]$-components.

The base case of function \Decomp is  reached (lines 12 -- 15) 
when the existence of such an HD-fragment is trivial, i.e.: 
either there are at most $k$ edges and no special edges left; or there is no edge and
only one special edge left. In these cases, the desired HD-fragment simply consists
of a single node whose $\lambda$-label either consists of the $\leq k$ edges or of the single special edge, 
respectively. 

Function \Decomp  is controlled by two nested loops
(lines 16 -- 39 for the outer loop and 
lines 24 -- 39 for the inner loop), which search for  the $\lambda$-labels of 
two adjacent nodes $p$ and
$c$ of the desired HD-fragment, such that $p$ is the parent and $c$ is the child. 
The idea of determining two nodes $p$ and $c$ is that, in an HD, we can determine 
$\chi(c)$ from $\lambda(c)$ if we  know $\lambda(p)$ and the $[\lambda(p)]$-component covered
by the subtree $T_c$ rooted at $c$,  see Corollary~\ref{cor:chi-vs-lambda} and 
Definition~\ref{def:normalform}.

We want node $c$ to be a balanced separator of the extended subhypergraph $H'$. 
By Lemma~\ref{lem:existence-balsep}, a balanced
separator is guaranteed to exist. To find a balanced separator $c$, we have to make sure that node $c$ satisfies the two 
conditions of Definition~\ref{def:balsep}, i.e.: (1) all of the subtrees rooted 
at a child of $c$ cover at most half of the edges and special edges in $H'$ and (2) the subtree $\Tupc$ 
``above'' $c$ covers strictly fewer than half of the 
edges and special edges in $H'$. For the second condition, observe that 
$comp_{\low}$ (chosen at line 19) is meant to be covered precisely by 
$T_c$. Note that, w.l.o.g., we are searching for an HD in normal form. 
This is why we may assume that $T_c$ covers exactly one  
$[\lambda(p)]$-component, namely $comp_{\low}$.
Further observe that  the edges and special edges covered by $\Tupc$ and 
the set $comp_{\low}$ partition the edges and special edges in $H'$. Hence, 
checking if $comp_{\low}$ contains more than half of $H'$ (on line 18) 
is equivalent to checking condition (2), i.e., 
$\Tupc$ covers strictly fewer than half of the edges and special edges in $H'$. 
In order to check that $c$ also satisfies the first 
condition of 
Definition~\ref{def:balsep}, we have to compute all 
$[\lambda(c)]$-components inside  $comp_{\low}$ (line 28) and 
check that the size of each of them is at most half of the size of $H'$ (line 29). Again, since we are only interested in HDs in normal form, we may assume
here that each subtree rooted at a child of $c$ covers exactly one of these 
$[\lambda(c)]$-components.

If such a balanced separator $\lambda(c)$ together with the $\lambda$-label $\lambda(p)$ at its parent node
has been found, several checks have to be performed to make sure that the HD-fragment under construction 
satisfies the connectedness condition. For instance, all vertices in the intersection of $Conn$ (i.e., the interface of the 
HD-fragment currently being constructed with the remaining HD ``above'' this HD-fragment)  with 
component $C_p$ (i.e., a component ``below'' node $p$) also have to occur in $\bigcup \lambda(p)$ (line 22). 

Suppose that all these checks succeed. From $\lambda(p)$ and $\lambda(c)$, we can compute 
$\chi(c)$ according to Condition~3 of the normal form introduced in 
Definition~\ref{def:normalform} (line 25).  
In the HD $\mathcal{D}'$ to be constructed for the extended subhypergraph $H'$, 
the edges and special edges of $H'$ can be split 
into 3 disjoint categories: 

\begin{enumerate}
\item the edges and special edges covered by $\chi(c)$, 
\item the edges and special edges covered by a subtree rooted at some child node of $c$, and
\item the edges and special edges covered in the HD ``above''  $c$.
\end{enumerate}

The edges and special edges  covered by $\chi(c)$ are done and need no further consideration.
The edges and special edges  in the second and third category are taken care of by recursive calls to the 
function  \Decomp (lines 33 and 37). 
To this end, we compute all $[\chi(c)]$-components $C_1,  \dots, C_m$
(line 28).
Now suppose that 
$C_1, \dots, C_\ell$ with $1 \leq \ell \leq m$ are the $[\chi(c)]$-components inside the $[\lambda(p)]$\-component $C_p$. 
Then the function  \Decomp is called recursively for each of the $[\chi(c)]$-components 
$C_1, \dots C_\ell$ (line 33). In the call for component $C_i$, 
the interface $Conn_i$  is obtained 
simply as the intersection of the vertices in 
 $C_i$ and in $\chi(c)$ (line 32).
All of the remaining $[\chi(c)]$-components are taken care of by the HD-fragment ``above'' $c$, which we try to construct
in another recursive call of function  \Decomp (line 37). 
In this recursive call,
$\chi(c)$ is added as yet another special edge -- in addition to the 
edges and special edges in the $[\chi(c)]$-components outside $C_p$. The additional special edge in the recursive call for the 
HD-part ``above'' node $c$ and the interfaces $Conn$ defined for each of the components as the intersection against $\chi(c)$,  in the recursive calls for the HD-parts ``below'' node $c$ 
ensure that we can (provided that all recursive calls of function  \Decomp are successful) stitch together the HD-fragments of these 
recursive calls to an HD-fragment of the extended subhypergraph $H'$ of $H$.

To sum up, if all recursive calls return ``true'' then the overall result of this call to function \Decomp is successful and 
returns ``true'' (line 39) . 
If at least one of the recursive calls returns ``false'', then we
have to search for a different label $\lambda(c)$ (in the next iteration of the ``ChildLoop``).  If eventually all candidates for $\lambda(c)$ have been tried out 
and none of them was successful,
then we have to search for a different label $\lambda(p)$ of the parent node $p$ 
(in the next iteration of the ``ParentLoop``)
and restart the search for $\lambda(c)$  from scratch.
Only when also all candidates for $\lambda(p)$ have been tried out and none of them was
successful, then function \Decomp returns the overall result ``false'' (line 40).

Below, we state the crucial property of  \logk, 
which makes this approach particularly well-suited for a parallel implementation.

\begin{theorem}
\label{theo:recursion-depth}
Algorithm \logk correctly checks for 
given hypergraph $H$ and integer $k \geq 1$, 
if $\hw(H) \leq k$ holds. 
The algorithm is realised by a main program 
and the recursive function \Decomp, whose
recursion depth is
bounded logarithmically in the number of edges of $H$, 
i.e., $O(\log(|H|)$. 
\end{theorem}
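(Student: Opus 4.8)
The plan is to track a single potential through the recursion and show that it at least halves with every recursive call. Define the \emph{measure} of an extended subhypergraph $H' = \langle E', \Sp, \Con\rangle$ as $\mu(H') = |E'| + |\Sp|$, i.e.\ the total number of edges and special edges; this is exactly the quantity written $|H'|$ in the pseudo-code. At the top level the main program invokes \Decomp on the $[\lambda(r)]$-components of $H$, each of which has no special edges, so its measure is at most $|E(H)| = |H|$. It therefore suffices to prove the following: every recursive call of \Decomp is made on an extended subhypergraph whose measure is at most $\lceil \mu/2 \rceil$, where $\mu$ is the measure of the current call.

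First I would bound the ``below'' recursive calls (lines 33--34). For each child node $u_i$ of the separator $c$, the corresponding subproblem is exactly the $[\chi(c)]$-component $\cov(T_{u_i})$ inside $comp_{\low}$, and crucially \emph{no} special edge is introduced here, since the interface to $c$ is recorded only in the $\Con$-parameter. The balanced-separator test on lines 29--30 guarantees $|\cov(T_{u_i})| \leq \frac{\mu}{2}$, so by integrality the measure of each such subproblem is at most $\lfloor \mu/2\rfloor \leq \lceil \mu/2\rceil$.

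Next I would bound the ``above'' recursive call (lines 37--38). Its extended subhypergraph covers precisely the edges and special edges outside $comp_{\low}$, that is $\cov(\Tupc) = (E'\cup\Sp)\setminus comp_{\low}$, together with the single new special edge $\chi(c)$ encoding the interface to $c$. Hence its measure equals $\bigl(\mu - |comp_{\low}|\bigr) + 1$. The test on lines 18--19 enforces the \emph{strict} inequality $|comp_{\low}| > \frac{\mu}{2}$, so by integrality $|comp_{\low}| \geq \lfloor \mu/2\rfloor + 1$, and therefore the ``above'' measure is at most $\mu - \lfloor \mu/2\rfloor = \lceil \mu/2\rceil$. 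I expect this to be the main obstacle: the $+1$ contributed by the new special edge $\chi(c)$ is absorbed only because Definition~\ref{def:balsep} demands a strict inequality for the part above the separator while allowing equality for the parts below. This strict/non-strict asymmetry is precisely what lets the halving survive the extra special edge, and the proof hinges on making the integrality step explicit.

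Finally I would assemble the recurrence. Along any root-to-leaf path in the recursion tree the measures satisfy $\mu_{i+1} \leq \lceil \mu_i/2\rceil$, which strictly decreases whenever $\mu_i \geq 2$ and reaches $\mu_i \leq 1$ after at most $\lceil \log_2 \mu_0\rceil$ steps. A subproblem of measure at most $1$ is a base case of \Decomp (lines 12--15): either a single edge and no special edge, or no edge and a single special edge. Hence every recursion path has length $O(\log \mu_0) = O(\log |H|)$, which gives the claimed bound. Note that the branching of \Decomp into several subproblems per call affects only the total work, not the depth, since the recursion depth is the maximum over individual recursion paths, each of which is governed by the halving recurrence above.
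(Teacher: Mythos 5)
Your proof is correct and follows essentially the same route as the paper's: both arguments rest on the balanced-separator property, bound each recursive subproblem by $\lceil |H'|/2\rceil$, and observe that the strict inequality in the test on line 18 absorbs the extra special edge $\chi(c)$ added to $comp_{\up}$. Your version merely makes the integrality step and the resulting recurrence explicit, which the paper leaves implicit.
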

\begin{proof} The size of the extended subhypergraphs in the calls of function \Decomp in the main
 program can only be bounded by the size of $H$ itself. However, in every subsequent execution
 of \Decomp for some extended subhypergraph $(H'.E,H'.\Sp, \Con)$, we always choose node $c$ as a
 balanced separator. By Lemma~\ref{lem:existence-balsep}, such a balanced separator always exists.
 The extended subhypergraphs in the recursive calls are therefore guaranteed to
 have size at most $\lceil \frac{|H'|}{2}\rceil$. Note that the rounding up is necessary because
 the new  special edge $\chi(c)$ is added in the recursive call for the HD-fragment above $c$.
 Without this special edge, this component
 is guaranteed to be strictly smaller than $ \frac{|H'|}{2}$. At any rate,
 also with the upper bound $\lceil \frac{|H'|}{2}\rceil$ on the size of the extended subhypergraphs
 of $H$ in the recursive calls and with the additional calls of \Decomp from the main program, we
 thus get an upper bound $O(\log(|H|)$ on the recursion depth. 
\end{proof}

Note that we have formulated algorithm  \logk as a decision 
procedure that decides if $\hw(H) \leq k$ holds for given $H$ and $k$. In case 
of a successful computation (i.e. return-value true) it 
is easy to assemble a concrete HD of width $\leq k$ of $H$ from the HD-fragments corresponding to the various calls of procedure \Decomp. 

We  emphasize two further important properties of 
algorithm  \logk:  First, it should be noted that
the logarithmic bound on the recursion depth does not restrict the form of the HD in any way. In particular, 
it does not imply a logarithmic bound on the depth of the HD. The bound on the recursion depth is achieved by our novel 
approach of constructing the HD by recursively ``jumping'' to a balanced separator of the HD-fragment to be constructed rather than constructing the HD in a strict top-down manner as proposed in previous approaches
\cite{DBLP:journals/jcss/GottlobLS02,DBLP:journals/jea/GottlobS08}.

Second, we stress that it is crucial in our approach that we search for appropriate 
$\lambda$-labels for {\em a pair $(p,c)$ of nodes\/}, where $p$ is the parent of $c$. 
The rationale is that we need the $\lambda$-label of the parent in order to 
determine $\chi(c)$ from $\lambda(c)$. And only when we know $\chi(c)$, we 
can be sure, which edges are indeed covered by $\chi(c)$. This knowledge is crucial 
to guarantee that all of the recursive calls of 
function \Decomp have to deal with an extended subhypergraph whose size is halved, which in turn guarantees
the logarithmic upper bound on the recursion depth. This strategy is significantly different from all 
previous approaches of decomposition algorithms. 
In \cite{10.1145/3440015,DBLP:conf/ijcai/GottlobOP20}, a parallel algorithm for 
generalised hypertree decompositions is presented. There, the problem of determining the $\chi$-label of the 
balanced separator is solved by adding a big number of subedges to the hypergraph so that one may 
assume that $\chi(u) = \bigcup \lambda(u)$ holds for every node $u$. Clearly, this addition of subedges, in general, leads to 
a substantial increase of 
the hypergraph. In \cite{DBLP:phd/ethos/Akatov10},
a preliminary attempt to parallelise the computation of HDs was made without handling pairs of nodes. 
However, in the absence of $\lambda(p)$, we cannot determine $\chi(c)$ from $\lambda(c)$. Consequently,
we do not know which edges covered by $\bigcup \lambda(c)$ are ultimately covered by $\chi(c)$. 
Hence, all the edges covered by $\bigcup \lambda(c)$ would have to be added to the recursive call of \Decomp for 
the HD-part ``above'' $c$, thus destroying the balancedness and the logarithmic upper bound on the recursion depth.

\smallskip
By Theorem \ref{theo:recursion-depth}, 
Algorithm \logk guarantees a logarithmic  bound on the 
recursion depth and thus provides a good basis for 
a parallel implementation. Nevertheless it 
still leaves room for several improvements. 
For instance, we can define also negative  base cases to detect the overall answer ``false''  faster,  we can restrict the 
edges that may possibly be used in the $\lambda$-labels of an extended subhypergraph (and provide them as an additional 
parameter of the function \Decomp), etc. These ideas and several further improvements -- together with the 
pseudo-code of the  resulting improved algorithm -- are 
presented in Appendix~\ref{sect:improvements}.

\section{Implementation and Evaluation}
\label{sect:empirical}

We report now on the empirical results 
obtained for our implementation of the \logk algorithm.
Our experiments are based on the HyperBench benchmark from \cite{10.1145/3440015}, 
which was already used for the evaluation of previous decomposition algorithms, notably 
\newdetk~\cite{10.1145/3440015} 
(an enhanced re-implementation of \detk~\cite{DBLP:journals/jea/GottlobS08})
and \hdsmt~\cite{DBLP:conf/ijcai/SchidlerS21}.

Our goal was to determine the exact hypertree width of as many instances as possible. 
We compare here the performance of three different decomposition methods,
namely 
\newdetk~\cite{10.1145/3440015}, \hdsmt~\cite{DBLP:conf/ijcai/SchidlerS21},
and our implementation of \logk. Note that while the tested implementations include the capability to compute GHDs or FHDs, we only consider the computation of HDs in our experiments here. 
Our new implementation of \logk is based on the open-source code of BalancedGo~\cite{DBLP:conf/ijcai/GottlobOP20}, a parallel algorithm for computing GHDs.
The full raw data of our experiments\footnote{\url{https://zenodo.org/record/6389816}} as well as the  source code of our implementation\footnote{\url{https://github.com/cem-okulmus/log-k-decomp}} of \logk are provided at the URLs below.

\subsection{Benchmark Instances and Setting}

For the evaluation, we use the benchmark library HyperBench~\cite{10.1145/3440015}. It contains 3648 hypergraphs underlying CQs and CSPS from various sources in industry and the literature and is commonly used to evaluate decomposition algorithms. 
The instances are available at \url{http://hyperbench.dbai.tuwien.ac.at} and in the raw data accompanying this manuscript. %

\paragraph{Hardware and Software.}
Our implementation is written in the programming language Go using version 1.14 and we will refer to it as \logk. We will give more details below on how it was configured for the experiments reported in Section~\ref{subsect:evalorsomething}.  The hardware used for the evaluation was a cluster of 12 nodes, using Ubuntu 16.04.1 LTS, with Linux kernel 4.4.0-184-generic, GCC version 5.4.0. Each node has a 12 core Intel Xeon CPU E5-2650 v4, clocked at  2.20 GHz and using 264 GB of RAM. 

\paragraph{Setup of Experiments.} 
To ensure comparability of our experiments with results published in the literature, we employ the following test  setup and restrictions:
a timeout of one hour was used and available RAM was limited to 1 GB. We note that this corresponds to limits also used in previous experiments in this area~\cite{10.1145/3440015,DBLP:conf/ijcai/GottlobOP20}.
For \logk, each run needs two inputs: a hypergraph $H$ and the width parameter $k \geq 1$. 
For these tests, we used width parameters in the range $[1,10]$. 
When running tests for \hdsmt, we used different memory limits. Namely, we allowed \hdsmt to use up to 24 GB of RAM since 
SMT solving is significantly more memory intensive than the other two algorithms. Note that the other two algorithms have very low memory requirements and are not constrained in any way by the 1 GB limit and the respective experiments are therefore still comparable with \hdsmt.
Furthermore, \hdsmt needs no width parameter since it directly tries to find an optimal solution.

We used the HTCondor system~\cite{condor-practice} to facilitate the tests, limits to memory and number of cores accessed by running test instances. 

Throughout this section we will be interested in two key
  metrics. First, the number of \emph{solved} instances, by which we
  mean instances for which an optimal (i.e., minimal width) hypertree
  decomposition was found and proven optimal. Second, the computation
  time that was necessary to compute the optimal width decomposition,
  which we will refer to as the \emph{running time} or simply
  \emph{runtime}. Importantly, this means that average running times
  are taken only over the instances that the respective algorithm is able to solve,
  while timed out instances are not considered in the running time calculation.

\begin{table*}[t]
\setlength{\tabcolsep}{2pt}
\centering

\newcolumntype{?}{!{\vrule width 1pt}}

    \caption{Comparison of prior methods and \logk: number of cases\textbf{} solved and runtimes (sec.) to find  optimal-width HDs.}
    \label{tab:bigeval}
    
\begin{tabular}{c?  c | c| c}
\toprule
  & \multicolumn{3}{c}{Hypertree Decomposition Methods} \\ 
\midrule
\begin{tabular}{c c}
\begin{tabular}{r r c}
    \multicolumn{1}{c}{Origin of} & \multicolumn{1}{c}{Size of} &  \multicolumn{1}{c}{Instances in}  \\
    \multicolumn{1}{c}{Instances}  & \multicolumn{1}{c}{Instances} &  \multicolumn{1}{c}{Group} \\
 \midrule

Application & $75< |E| \leq 100$            & 405 \\ 
            & $50< |E| \leq \phantom{0}75 $ & 514 \\   
            & $10< |E| \leq \phantom{0}50 $ & 369 \\ 
            &     $|E| \leq \phantom{0}10 $  & 915 \\ 
\midrule 

Synthetic   & $\phantom{10 < } |E| > 100 $              &  \phantom{0}66 \\  
            &             $75< |E| \leq 100$            & 422 \\ 
            &             $50< |E| \leq \phantom{0}75 $ & 215 \\   
            &             $10< |E| \leq \phantom{0}50 $ & 647 \\ 
            & $\phantom{100<}  |E| \leq \phantom{0}10 $ &  \phantom{0}95 \\ 
\midrule
\multicolumn{1}{c}{Total}   & \multicolumn{1}{c}{-} & 3648 
\end{tabular}

 \end{tabular} &  
 \begin{tabular}{rrrr}

\multicolumn{4}{c}{\newdetk~\cite{10.1145/3440015}} \\  %

 \#solved & avg & max & stdev \\ 
\midrule
97&    %
21.4  & 3296.0  & 192.8 \\

276&    %
10.6  & 1906.0  & 104.7\\

253&    %
  0.0 & 0.0 & 0.0\\

906&    %
  0.0 & 0.0 & 0.0\\

  \midrule 

18&     %
  0.2 & 7.0 & 1.0\\

87&    %
  77.2 & 3467.0  & 379.3\\
 
38&    %
 18.8 & 1593.0  & 141.9 \\

290&    %
   56.0 & 3240.0  & 336.3  \\

\textbf{95}&     %
  0.0 & 0.0 & 0.0 \\

\midrule

2060  & %
20.6  & 3467.0  & 194.2

 \end{tabular}  & 
  \begin{tabular}{rrrr}

\multicolumn{4}{c}{\hdsmt~\cite{DBLP:conf/ijcai/SchidlerS21}} \\  

 \#solved & avg & max & stdev \\ 
\midrule
65&    %
  809.5 & 3156.6  & 735.2  \\

448&    %
 250.0  & 3281.5  & 409.3  \\

237&    %
 60.1 & 1017.9  & 150.3 \\

876&    %
 56.6 & 1427.1  & 155.0  \\

 \midrule 

13&     %
 734.0  & 2507.1  & 711.7   \\

\textbf{312}&    %
 1045.2 & 3591.1  & 1287.0  \\
 
212&    %
    101.7 & 2560.1  & 246.1  \\

303&    %
  412.2 & 3597.4  & 850.2   \\

78&     %
  28.8  & 218.5 & 41.5  \\

\midrule

2544  & %
  280.2 & 3597.4  & 676.7

 \end{tabular} & 
  \begin{tabular}{rrrr}

\multicolumn{4}{c}{ \logk Hybrid} \\  %

 \#solved & avg & max & stdev \\ %
\midrule
\textbf{261}&    %
86.5  & 3555.8  & 332.4 \\

\textbf{469}&    %
0.5 & 78.5  & 3.6 \\

\textbf{253}&    %
 0.0  & 0.1 & 0.0 \\

\textbf{915}&    %
0.0 & 0.0 & 0.0 \\

\midrule 

\textbf{34}&     %
46.9  & 2528.2  & 209.6 \\

235 &    %
48.9  & 2495.6  & 210.9 \\

\textbf{215}&    %
4.1 & 476.3 & 32.7 \\

\textbf{625}&    %
 18.8 & 3526.3  & 174.7  \\

\textbf{95}&     %
0.0 & 0.0 & 0.0 \\

\midrule

\textbf{3102}  & %
30.5  & 3555.8  & 197.8

 \end{tabular}

 \\
\bottomrule
\end{tabular}

 \end{table*}
\vspace{-2mm}

\subsection{Empirical Evaluation}
\label{subsect:evalorsomething}

We report here on the main results of our experiments. A number of additional experiments can be found in Appendix~\ref{sect:moreEval}, 
providing a variety of further details and insights. 
Our implementation of \logk also employs the following {\em hybridisation strategy\/}: 
as will be seen below, \newdetk 
performs very well on small hypergraphs but has difficulties with even slightly larger instances. In contrast, a particular strength of our new 
\logk algorithm is to quickly split a big hypergraph into significantly smaller extended subhypergraphs. To combine the best of both worlds, 
we use \logk to split the original HD computation problem until the subproblems become small, at which point
we apply our own implementation of \detk (extended to handle extended subhypergraphs correctly) to the small subproblems. 
For details and an 
 experimental evaluation of different parametrisations for our hybridisation strategy, %
see Appendix~\ref{sec:hybrid}.

We compare the aforementioned hybrid version of the \logk algorithm %
with the two state-of-the-art implementations for finding HDs: \newdetk~\cite{10.1145/3440015} and 
\hdsmt~\cite{DBLP:conf/ijcai/SchidlerS21}. 

Our results are summarised in Table~\ref{tab:bigeval}, distinguishing the  hypergraphs in the HyperBench benchmark by size and origin \footnote{HyperBench instances are often categorised more fine-grained in terms of their origin (cf.,~\cite{10.1145/3440015}). For our experiments we have found the direct effect of hypergraph size to be more informative and therefore report our results in this way instead.}.
We distinguish between two main categories, hypergraphs that are derived from applications and hypergraphs that were synthetically generated. In each group we report our results split by the number of edges $|E|$ in the instance. Note that the group $|E|>100$ of instances with more than 100 edges is empty for the Application case and thus omitted from the table.
\emph{Instances in Group} reports the number of instances in each such group. 
For each algorithm and each group of instances, we list the number of solved instances (\emph{\#solved}) and statistics over the running times (\emph{avg}, \emph{max}, \emph{stdev}). Times are all in seconds and rounded to a single digit after the comma. Results over all groups are given in the last row titled ``Total''.

As mentioned above, some care is required when comparing times between
algorithms. While \newdetk has low average time overall, this is partly
due to solving fewer instances. The data therefore demonstrates that,
in general, \newdetk either solves an instance quickly or fails to find
an optimal width decomposition before timing out.  Overall, we see that despite
solving significantly more instances than its competitors, running times for \logk overall are
comparable with \newdetk and noticeably lower than for \hdsmt.

It may be of further interest how these numbers compare to the performance of state of the art algorithms for finding generalised hypertree decompositions. The results reported for BalancedGo~\cite{DBLP:conf/ijcai/GottlobOP20} (on a comparable system) show that the best method there solves only 1730 instances optimally without timeout. In contrast \logk manages to solve 2491 of the instances tested there optimally\footnote{The evaluation in \cite{DBLP:conf/ijcai/GottlobOP20} considers only a subset of HyperBench with 3071 instances}. Furthermore, in none of the cases where BalancedGo finds the optimal $\ghw$ is it lower than the optimal $\hw$. 
In other words, in practice, the additional complexity of 
GHDs compared with HDs is not compensated by achieving lower width
(even if, in theory, no better upper bound on the $\hw$ 
than $\hw \leq 3 \cdot \ghw + 1$ is known
\cite{DBLP:journals/ejc/AdlerGG07}). 

In our experiments,
we also observe that for low widths  -- i.e., cases where using HDs is most promising in practice -- \logk is very close 
to solving all instances. In particular, of the 3224 instances with width at most 6, \logk solves 2930 (92\%) instances. 
In contrast,  \newdetk and \hdsmt time out on 1206 and  766, respectively, of those instances. 
This  suggests that \logk can be 
a solid foundation for the integration of HDs in practice going forward. If we look at instances of $\hw \leq 5$ the situation improves even further, with \logk solving $2450$ out of $2482$ ($98.7\%)$ instances;  compared to $80\%$ and $86\%$ solved by \newdetk and \hdsmt, respectively.

The experiments reported in Table~\ref{tab:bigeval}
were performed over the full set of HyperBench instances. However, for the additional experiments
reported in this section, it is more meaningful to restrict our experiments to exclude hypergraphs that are, roughly speaking, too small or have high width. Small instances benefit only marginally from algorithmic improvements or parallelism, while very high width is of less algorithmic interest as it exponentially effects algorithms that make use of decompositions. Hence, we propose to exclude such instances to make more relevant observations.
We therefore focus on instances with more than 50 edges and vertices that are known to have hypertree width at most 6.
  There are 465 instances in HyperBench which satisfy
  these conditions; we will refer to them as \hblarge.

We performed a second set of experiments over the instances in \hblarge to verify our claims that \logk is well-suited for parallelisation. For $1 \leq n\leq 5$, we observe the time taken to find and verify the optimal width of an instance using $n$ CPU cores. We report on the times to find these optimum widths averaged over all instances in \hblarge in Figure~\ref{speedup}. To avoid a decreasing number of timeouts from skewing the data we report the average only over instances that do not timeout for any $n$ for a given algorithm. For reference, we also report the (single core) performance of \newdetk for the same setting. 

We observe approximately linear speedups up to 4 cores, from about 189 seconds on 1 core to 50 seconds for 4 cores for \logk.
This behaviour is expected since
 our parallelisation strategy relies on dividing up the search space for bounded separators uniformly over the the available cores. Since this requires no communication between threads or other overhead that depends on the degree of parallelisation, the key task of searching for balanced separators  scales linearly in the number of cores.
 In instances where the search for separators dominates the running time, such as negative instances where the full search space is explored, analysis of our algorithm therefore predicts effectively linear scaling of performance.
 In the data from Figure~\ref{speedup}, we observe diminishing returns in the rate of improvement of average running time starting from 5 cores. However, preliminary experiments on additional different systems do not confirm this behaviour and there \logk exhibits linear scaling up to much higher core counts. Further in-depth experimentation is therefore required to obtain a clearer picture for the scaling behaviour for a high number of cores.

Very similar scaling can be observed for our Hybrid version. Note that the reported times for the Hybrid algorithm are slightly higher only due to solving more (harder) instances. %

\begin{figure}[t]
  \definecolor{fancyblue}{HTML}{298CE3}
  \definecolor{fancyblueNeg}{HTML}{0F4C81}
  \definecolor{vermillion}{HTML}{E34234}
  \definecolor{vermillionNeg}{HTML}{69150E}

\begin{minipage}[]{5cm}
 \centering
  \begin{tikzpicture} [transform shape,scale=0.75]
\begin{axis}[ 
  ylabel={Average running times (sec)},
    xtick = {1,...,6},
    width = 7 cm,
    height= 8 cm,
  ]

\addplot[color=fancyblue, mark=square]  coordinates {

(1 , 189.76)
(2 , 101.56)
(3 , 69.23)
(4 , 51.87)
(5 , 43.03)
(6 , 41.17)

};
\addlegendentry{\logkshort \phantom{(Hybrid) \phantom{neg.}}}; %

\draw [gray] (0,124.47) --(8,124.47)    
node[pos = 0.55,above] {\newdetk};

\addplot[color=vermillion, mark=square]  coordinates {

(1 , 172.12)
(2 , 89.76)
(3 , 62.87)
(4 , 50.37)
(5 , 43.1)
(6 , 40.49)

};
\addlegendentry{\logkshort (Hybrid) \phantom{neg.} };

  \path (0.5,0) coordinate (O) (4.4,0) coordinate (R);  
\end{axis}

\node at (2.7, -0.7)  {\# cores};

\end{tikzpicture} 
\end{minipage} \begin{minipage}[]{3cm}  
 \centering
 \small
\setlength{\tabcolsep}{2pt}
\begin{tabular}{l c}
\toprule 
Method & Timeouts  \\
\midrule 
\logkshort (Hybrid) &  \textbf{143} \\ 
\logkshort  & 666 \\ 
\newdetk & 611 \\ 
\bottomrule
\end{tabular}

\end{minipage}

\caption{Study of \logk scaling behaviour w.r.t. the number of processing cores used.
}
\label{speedup}
\end{figure}
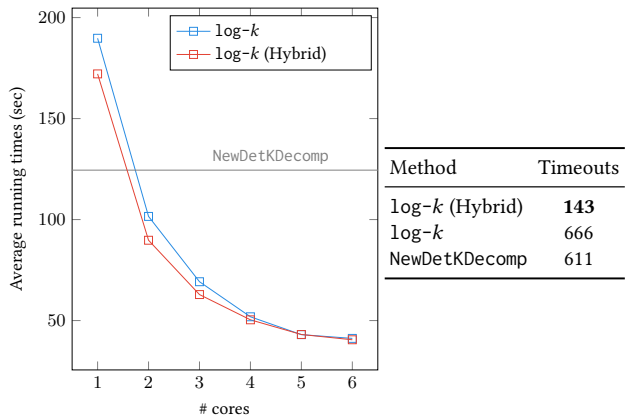

\section{Conclusion}
\label{sect:Conclusion}
In this paper we introduced a novel algorithm \logk for 
computing hypertree decompositions. Based on new theoretical
insights and results on HDs, we were able to propose an algorithm that constructs decompositions in
arbitrary order (rather than, e.g., in a strict top-down manner) while achieving a \emph{balanced} separation into 
subproblems. In this way, we have obtained a 
logarithmic bound on the 
recursion depth of our algorithm, making it particularly 
well suited 
for parallelisation.
We evaluated an implementation of \logk through experimental
comparison with the state of the art. On the standard benchmark for
hypertree decomposition, we are able to achieve clear improvements both in the number of solved instances 
and in the time required to solve them. 

In combination, our theoretical results and experiments demonstrate
that \logk achieves our goal of effective parallel HD computation. We
believe that the performance improvements, especially on large
hypergraphs lay a strong foundation
for more widespread adoption of hypertree decompositions in practice, e.g., for complex query
execution in high-performance database applications.

With HD computation for large and complex hypergraphs becoming practically
feasible, one of the key challenges that block the use of HDs is quickly
becoming less problematic.  We therefore consider full integration of
hypertree decompositions into existing database systems and constraint solvers to be a
natural next step in this line of research.

Experiments suggest that there is
significant potential in the study of metrics for hybrid
approaches. In particular, how can we decide effectively when to switch from the balanced separation of \logk to the greedy heuristic guided method underlying \detk. This motivates a more in-depth study of hybridisation metrics in the future.

\section*{Acknowledgements}
This work was supported by the Austrian Science Fund (FWF) project P30930-N35. Georg Gottlob is a Royal
Society Research Professor and acknowledges support by the Royal Society for the present work in the
context of the project "RAISON DATA" (Project reference: RP\textbackslash R1\textbackslash 201074).
Matthias Lanzinger acknowledges support by the Royal Society  project "RAISON DATA" (Project reference: RP\textbackslash R1\textbackslash 201074).

\bibliographystyle{abbrv}
\bibliography{soda_parallelHD}

\appendix

\section{Correctness Proof of Algorithm \ref{alg:parHD}}
\label{sect:correctness}

We prove the soundness and completeness of the algorithm \logk given in
 Algorithm~\ref{alg:parHD} separately. 
The polyno\-mial-time upper bound on the construction of an HD in case of a successful run 
of the algorithm (i.e., if it returns ``true'') will be part of the soundness proof.

It is convenient to first prove the following claim: 

\medskip
\noindent
{\sc Claim A.}
{\em In every call of function \Decomp in Algorithm~\ref{alg:parHD} with parameters
$(H', \Con)$ it is guaranteed that 
$\Con \subseteq V(H')$ holds with 
$V(H') = \big(\bigcup H'.E) \cup \big(\bigcup H'.\Sp)$.
}

\begin{proof}[Proof of Claim A]
The proof is by induction on the call depth of the recursive function \Decomp.

\smallskip

\noindent{\em induction begin.} The top-level calls of function \Decomp on line 7 are with
 parameters $(y,\Con_y)$, where $\Con_y$ is defined on line 6 as $\Con_y = V
 (y) \cap \bigcup \lambda(r)$. Hence, $\Con_y \subseteq V(y)$ clearly holds. 

\smallskip

\noindent{\em induction step.} Suppose that Claim A holds for every call of function \Decomp down to
 some call level $n$ and suppose that function \Decomp is called recursively during execution
 of \Decomp at call level $n$. Suppose that this execution of \Decomp is with parameters $
 (H',\Con)$. The only places where function \Decomp is called recursively are lines 33 and 37. More
 specifically, \Decomp is called with parameters $(x,\Con_x)$ on line 33 and with parameters $
 (comp_{up}, \Con)$ on line 37. We have to show that both $\Con_x \subseteq V(x)$ (on line 33) and
 $\Con \subseteq V(comp_{up})$ (on line 37) hold. On line 33, the condition is trivially fulfilled,
 since $\Con_x$ is defined on line 32 as $\Con_x = V(x) \cap \chi(c)$. 

It remains to  consider the call of function \Decomp on line~37. 
Suppose that $comps_c$ on line 28 is of the form 
$comps_c = \{x_1, \dots, x_\ell\}$. 
By the definition of components in Definition~\ref{def:connectedness-components}, 
$H'$ (that is, $H'.E \cup H'.\Sp$) can be partitioned into the following 
disjoint subsets:
\begin{itemize}[topsep=5pt]
\item $x_1.E \cup x_1.\Sp, \dots, x_\ell.E \cup x_\ell.\Sp$
\item $y = \{f \in H'.E  \cup H'.\Sp \mid f \subseteq \chi(c)\}$. 
\item $z = (H'.E \setminus comp_{\low}.E) \cup (H'.\Sp \setminus comp_{\low}.\Sp)$
\end{itemize}
We thus have $V(H') = \bigcup_{i=1}^\ell V(x_i) \cup V(y) \cup V(z)$ with $V(y) \subseteq \chi
(c)$. By construction (line 28), all components $x_i$ are contained in $comp_{\low}$. Hence, we
actually have $V(H') = V(comp_{\low}) \cup \chi(c)  \cup V(z)$. The recursive call of
function \Decomp on line 37 is with the edges and special edges in $z$ plus $\chi(c)$ as an
additional special edge. Hence, we have $V(comp_{up}) = V(z) \cup \chi(c)$ when \Decomp is called
on line 37 with parameters $(comp_{up},\Con)$. It is therefore sufficient to show that
$\Con \subseteq V(z) \cup \chi(c)$ holds. 

By the induction hypothesis, we may assume that $\Con \subseteq V(H')$ holds. The check on line 22
ensures that $\Con \subseteq \bigcup \lambda(p)$ holds for some edge set $\lambda(p)$. Moreover,
the check on line 26 ensures that $\big(\bigcup \lambda(p)) \cap V(comp_{\low}) \subseteq \chi(c)$. In
total, we thus have $\Con  \cap V(comp_{\low}) \subseteq \chi(c)$. Together with $V(H') = V(comp_
{\low}) \cup \chi(c)  \cup V(z)$ and $\Con \subseteq V(H')$, we may thus conclude $\Con \subseteq V
(z) \cup \chi(c)$ and, therefore, $\Con \subseteq V(comp_{up})$ (on line 37). Hence, also
the call of function \Decomp on line 37 satisfies Claim A.
\end{proof}

\begin{proof}[Soundness Proof]
Suppose that algorithm \logk returns ``true''. 
That is, for some value of $\lambda(r)$,  
each call of function \Decomp on line 7 returns ``true''. 
We have to show that there exists an HD of width $\leq k$ of $H$. 
To construct such an HD, we take
$\lambda(r)$ as the $\lambda$-label of the root $r$ of this HD. 
By the special condition of HDs, we must take 
$\chi(r) = \bigcup \lambda(r)$. Hence, the $[\chi(r)]$-components and
$[\lambda(r)]$-components of $H$ coincide 
and $comps_r$ computed on line 4 contains all $[\chi(r)]$-components of $H$. 

Now suppose that function \Decomp is sound (we will prove the correctness of this assumption below),
i.e., for an arbitrary extended subhypergraph $(E',\Sp,\Con)$ of $H$, if function \Decomp returns
``true'' on input $((E',\Sp),\Con)$, then there exists an HD of width $\leq k$ of $
(E',\Sp,\Con)$. Hence, if the calls of function \Decomp on line 7 all yield ``true'', then, by
assuming the soundness of \Decomp, we may conclude that an HD of width $\leq k$ exists for each
extended subhypergraph of $H$ of the form $(y.E, y.\Sp, \Con_y)$. Let each of these HDs be denoted
by ${   \mathcal{D} }[y]$ with tree structure $T[y]$ and let $r[y]$ denote  the root of $T[y]$. Then we can
construct an HD of $H$ by taking $r$ with $\lambda(r)$ from line 3 and $\chi(r)= \bigcup \lambda
(r)$ as root node and appending the HD-fragments ${ \mathcal{D}  }[y]$ to $r$, such that the root nodes $r
[y]$ of the trees $T[y]$ become child nodes of $r$. It is easy to verify that the resulting
decomposition is an HD of $H$, and this HD can be constructed in polynomial time from the
HD-fragments ${ \mathcal{D}  }[y]$. It remains to show that function \Decomp is sound.

\smallskip

\noindent{\em Soundness of function \Decomp.} 
For an arbitrary extended subhypergraph $(E',\Sp,\Con)$ of $H$, 
let function \Decomp return ``true'' on input 
$((E',\Sp),\Con)$; we have to show that then there exists an HD of width $\leq k$ of $(E',\Sp,\Con)$.
Moreover, we have to show that 
by materialising the decompositions implicitly constructed in the recursive calls of 
function \Decomp, 
an HD  of width $\leq k$ of $(E',\Sp,\Con)$  can be constructed in polynomial time
whenever  \Decomp returns ``true''.
The proof is by induction on $|E'| + |\Sp|$. 

\smallskip

\noindent{\em induction begin.} Suppose that $|E'| + |\Sp| = 1$ and that function \Decomp returns
 ``true''. Hence,   we either have $|E'| = 1$ and $|\Sp| = 0$ or we have $|E'| = 0$
 and $|\Sp| = 1$. In either case, an HD of this extended subhypergraph can be obtained with a
 single node $u$ by setting $\lambda(u) = \{f\}$ and $\chi(u) = f$, where $f$ is the only
 (special) edge in $E' \cup \Sp$. This decomposition clearly satisfies all conditions of an HD
 according to Definition~\ref{def:extendedHD}, the only non-trivial part being Condition (6): we
 have to verify $\Con \subseteq \chi(u)$. By Claim~ A above, we know that in every
 call of function \Decomp, $\Con$ is a subset of the vertices in $E' \cup \Sp$. Now in case $|E'| +
 |\Sp| = 1$ holds, we have $E' \cup \Sp = \{f\}$ for a single (special) edge $f$ and, therefore,
 $\chi(u) = f = \big(\bigcup E'\big) \cup \big(\bigcup \Sp\big)$. Hence, we indeed have
 $\Con \subseteq \chi(u)$. 

\medskip

\noindent
{\em induction step.}
Now suppose that $|E'| + |\Sp| > 1$ and that function \Decomp returns ``true''. 
This means that one of the return-statements in lines 13, 15, or 39 is executed. Actually, 
line 15 can be excluded for $|E'| + |\Sp| > 1$. Now consider the remaining two lines 13 and 39. 
If the return-statement on line  13 is executed, then 
we have $|E'| \leq k$ and $|\Sp| = 0$. In this case, analogously to the induction begin, the desired HD 
consists of a single node $u$ with $\lambda(u) = E'$ and $\chi(u) = \bigcup E'$. Again, 
all conditions of an HD according to Definition~\ref{def:extendedHD} are easy to verify; in particular,
the proof argument for Condition (6) is the same as above. 

It remains to consider the case that ``true'' is returned on line 39. This means that, for a
particular value of $\lambda(p)$ (chosen on line 16) and of $\lambda(c)$ (chosen on line 24), all
recursive calls of function \Decomp (on lines 33 and 37) return ``true''. By the induction
hypothesis, we may assume that for each of the extended subhypergraphs processed by these
recursive calls of \Decomp, an HD of width $\leq k$ exists. Note that we are making use of Claim~A
here in that we may assume that all recursive calls of \Decomp are with properly defined extended
subhypergraphs (in particular, the vertex set supplied as second parameter is covered by the edges
and special edges in the first parameter of each such call). 
Now look at these recursive calls: we are studying a call of function \Decomp with parameters
$H'$ and $\Con$, where $H'$ consists of a set $H'.E$ of edges and a set $H'.\Sp$ of special edges.
That is, function \Decomp  is processing the extended subhypergraph $(H'.E, H'.\Sp, \Con)$. The
current call of function \Decomp apparently has chosen labels $\lambda(p)$ and $\lambda(c)$ for
nodes $p$ and $c$, such that all checks on lines 18, 22, 26, and 29 are successful
in the sense that program execution continues with these values of $\lambda(p)$ and $\lambda(c)$.
In particular,
there exists a $[\lambda(p)]$-component $comp_{\low}$ of $(H'.E, H'.\Sp, \Con)$, satisfying the
conditions $V(comp_{\low}) \cap \Con \subseteq \bigcup \lambda(p)$   (line 22) and 
$V(comp_{\low}) \cap \bigcup\lambda(p)  \subseteq  \chi(c) $~(line~26). 

Let $\{x_1, \dots, x_\ell\}$ denote the set of $[\chi(c)]$-components of 
$H'$  inside $comp_{\low}$. Then $H'.E \cup H'.\Sp$ (i.e., the set of edges and
special edges in $H'$) can be partitioned into the following disjoint subsets:
\begin{itemize}[topsep=5pt]
\item $x_1.E \cup x_1.\Sp, \dots, x_\ell.E \cup x_\ell.\Sp$
\item $(H'.E \setminus comp_{\low}.E) \cup (H'.\Sp \setminus comp_{\low}.\Sp)$
\item $\{f \in H'.E  \cup H'.\Sp \mid f \subseteq \chi(c)\}$. 
\end{itemize}
From the first two kinds of sets of  edges and special edges, 
the following extended subhypergraphs are constructed, for which function \Decomp is then called
recursively on lines 33 and 37: 
\begin{itemize}[topsep=5pt]
\item for  each $x_i$ consisting of a set of edges $x_i.E$ and special edges $x_i.\Sp$, 
define $H_i = (x_i.E, x_i.\Sp, \Con_i)$ with 
$\Con_i = V(x_i) \cap \chi(c)$; 
\item for $(H'.E \setminus comp_{\low}.E) \cup (H'.\Sp \setminus comp_{\low}.\Sp)$
define $H^\uparrow = (E^\uparrow, \Sp^\uparrow, \Con^\uparrow)$ with 
$E^\uparrow = H'.E \setminus comp_{\low}.E)$ and 
$\Sp^\uparrow = (H'.\Sp \setminus comp_{\low}.\Sp) \cup \{\chi(c)\}$
and $\Con^\uparrow = \Con$.
\end{itemize}
By assumption, the recursive calls of \Decomp for each of these extended subhypergraphs return the
value ``true''. Thus, by the induction hypothesis, for each of these extended subhypergraphs, there
exists an HD of width $\leq k$. From these HDs, we construct an HD of $(H'.E, H'.\Sp, \Con)$ as
follows: 
\begin{itemize}[topsep=5pt]
\item First take the HD of $H^\uparrow$. We shall refer to this HD as ${ \mathcal{D}  }^\uparrow$. Let $r$
 denote the root node of ${ \mathcal{D}  }^\uparrow$. By $\Con^\uparrow = \Con$, we have
 $\Con \subseteq \chi(r)$. 
\item Recall that $\chi(c)$ was added as a special edge to the extended subhypergaph $H^\uparrow$.
 Hence, by Definition~\ref{def:extendedHD}, the HD ${ \mathcal{D}  }^\uparrow$ has a leaf node $u$ with
 $\lambda(u) = \{\chi(c)\}$ and $\chi(u) = \chi(c)$. Now we replace node $u$ in ${ \mathcal{D}   }^\uparrow$
 by node $c$ with $\lambda(c)$ and $\chi(c)$ according to the current execution of
 function \Decomp. Moreover, for every $f \in H'.\Sp$ with $f \subseteq \chi(c)$, we append a fresh
 child node $c_f$ to $c$ with $\lambda(c_f) = \{f\}$ and  $\chi(c_f) = f$. It is easy to verify
 that the resulting decomposition (let us call it ${ \mathcal{D} }'$) is an HD of the extended
 subhypergraph that contains all edges and special edges of $H'$ except for the ones in 
 any of the $x_i$'s. In particular, node $r$ with $\Con \subseteq \chi(r)$  is still the 
 root of~HD~${\mathcal{D}}'$. 
\item Now we take the HDs ${ \mathcal{D} }_i$ of the extended subhypergraphs $(x_i.E, x_i.\Sp, \Con_i)$ and
 append them as subtrees below $c$ in ${ \mathcal{D} }'$, i.e.: the root nodes of the HDs ${ \mathcal{D} }_i$
 become child nodes of $c$. Let us refer to the resulting decomposition as ${ \mathcal{D} }$. It remains
 to show that ${ \mathcal{D} }$ indeed is an HD of width $\leq k$ of the extended subhypergraph $
 (H'.E, H'.\Sp, \Con)$ of $H$. The width is clear, since all HD-fragments of ${ \mathcal{D} }$ and also
 $\lambda(c)$ have width $\leq k$. It is also easy to verify that every edge in $H'.E$  is covered
 by some node in ${ \mathcal{D} }$ and every special edge in $H'.\Sp$ is covered by some leaf node in $
 { \mathcal{D} }$. Moreover, also the connectedness condition holds inside each HD-fragment (by the
 induction hypothesis) and between the various HD-fragments. The latter condition is ensured by
 the definition of components in 
 Definition~\ref{def:connectedness-components} and by the fact that any two extended 
 subhypergraphs processed by the various recursive calls of function \Decomp can only share vertices
 from $\chi (c)$.
\end{itemize}
Finally,  note that the above construction of HD ${ \mathcal{D} }$ from the HD-fragments constructed in the
recursive calls of \Decomp is clearly feasible in polynomial time. 
\end{proof}

Before we prove the completeness of 
algorithm \logk, we introduce a special kind of 
extended subhypergraphs: let $H$ be a hypergraph and let 
$\mathcal{D} = \langle T,\chi,\lambda\rangle$ be an
HD of $H$ with root $r$. We call $H' = (E',\Sp,\Con)$ a {\em  $\mathcal{D}$-induced extended 
subhypergraph\/} of $H$, if there exists a subtree $T'$ of $T$ with the following properties:

\begin{itemize}[topsep=5pt]
\item $E' = \cov(T')$;
\item let $B$ denote those nodes in $T$ which are outside $T'$ but whose parent node is in 
$T'$; then $\Sp = \{ \chi(u) \mid u \in B\}$.
\item the root $r'$ of $T'$ is different from the root of $T$; hence, $r'$ has a parent node $p$ in $T$;
\item $\Con = V(H') \cap \bigcup \lambda(p)$.
\end{itemize}
An HD $\mathcal{D}' = \langle S', \chi', \lambda' \rangle$ of $H'$  is then obtained as follows: 
\begin{itemize}[topsep=5pt]
\item the tree $S'$ of $\mathcal{D}'$ is the subtree of $T$ induced by the nodes of $T'$ plus 
the nodes in $B$;
\item for all nodes u in $T'$, we set $\chi'(u) = \chi(u)$ and $\lambda'(u) = \lambda(u)$;
\item for all nodes u in $B$, we set $\chi'(u) = \chi(u)$ and $\lambda'(u) = \{\chi(u)\}$.
\end{itemize}

We shall refer to $\mathcal{D}'$ as the induced HD of $H'$.
It is easy to verify that $\mathcal{D}'$ is in normal form, whenever
$\mathcal{D}$ is in normal form.
In the completeness proof below, 
we shall refer to a  $\mathcal{D}$-induced extended subhypergraph of $H$
simply as an ``induced subhypergraph'' of $H$. 
No confusion can arise from this, since 
we will  always consider the same 
HD $\mathcal{D}$ of $H$ throughout the proof.

\begin{proof}[Completeness Proof] 
Suppose that hypergraph $H$ has an HD of width
 $\leq k$. We have to show that then algorithm \logk returns ``true''. 
 By Theorem~\ref{theo:normalform}, $H$
 also has an HD $\mathcal{D} = \langle T, \chi$, $\lambda \rangle$ of width $\leq k$ in normal
 form. Note that, in order to apply Theorem~\ref{theo:normalform}, we are considering $H$ as an
 extended subhypergraph $(E(H), \emptyset, \emptyset)$ of itself. Let $r$ denote the root of $T$.
 If algorithm \logk has not already returned ``true'' before, it will eventually try
 $\lambda(r)$ in the foreach-statement on line 3. Let $C_1, \dots, C_\ell$ with 
 $C_i \subseteq E (H)$ for each $i$ denote the $[\lambda(r)]$-components 
 (and hence also the $[\chi (r)]$-components) of $H$. 
 By the normal form of $\mathcal{D}$, we know that $r$ has $\ell$ child
 nodes $u_1, \dots, u_\ell$ with $\cov(T_{u_i}) = C_i$. 
 Now let $\Con_i = V(C_i) \cap \bigcup \lambda(r)$ for $i \in \{1, \dots, \ell\}$. 
Hence, $(C_i, \emptyset, \Con_i)$ is an extended subhypergraph of $H$. Moreover, 
by the connectedness condition, we have $\Con_i \subseteq \chi(u_i)$.
Hence, HD $\mathcal{D}$ restricted to the
subtree $T_{u_i}$ is in fact an HD of width $\leq k$ of the extended subhypergraph 
$(C_i, \emptyset, \Con_i)$. 

Now suppose that function \Decomp is complete 
on induced subhypergraphs of $H$
(we will prove the correctness of this assumption below). By this we mean that 
if $(E',\Sp,\Con)$ is an induced subhypergraph of $H$, 
then function \Decomp returns ``true'' on input 
$(E',\Sp),\Con)$.
Hence, the calls of function \Decomp on line 7 all yield ``true''.
Therefore, program execution exits the foreach-loop and 
executes the return-statement on line 9. 
That is, algorithm \logk returns ``true'' as desired. 
It remains to show that function \Decomp is complete on induced subhypergraphs.

\smallskip

\noindent{\em Completeness of function \Decomp.} 
Consider an arbitrary induced subhypergraph $(E',\Sp,\Con)$ of $H$. 
We have to show that then function \Decomp returns ``true'' on input 
$((E',\Sp),\Con)$. We proceed by induction 
on $|E'| + |\Sp|$.

\smallskip

\noindent
{\em induction begin.}
Suppose that $|E'| + |\Sp| = 1$. That is, 
we either have 
$|E'| = 1$ and $|\Sp| = 0$ or we have 
$|E'| = 0$ and $|\Sp| = 1$. In the first case, 
``true' is returned via the statement on line 13; in the second case, 
``true' is returned via the statement on line 15.

\medskip

\noindent
{\em induction step.}
Now suppose that $|E'| + |\Sp| > 1$.
If $|E'| \leq k$ and $|\Sp| = 0$, then the return-statement on line 13 is executed and the function 
returns ``true''.  It remains to consider the case 
that $|E'| > k$ or $|\Sp| > 1$ holds.
By Lemma~\ref{lem:existence-balsep}, 
the HD $\mathcal{D}'$ induced by $H'$ has a balanced separator. 
Let us refer to this balanced separator as the node $c$ in $\mathcal{D}'$.
By the balancedness, it can be easily verified that $c$ must satisfy $\lambda(c)  \subseteq E(H)$ 
(that is, $c$ is not a leaf node with $\lambda(c) = \{f\}$ for some special edge $f$). 
We distinguish two cases:

\smallskip

\noindent
{\em Case 1.} Suppose that $c$ is the root node of $\mathcal{D}'$. 
Recall that in our definition of induced subhypergraphs, the root of the corresponding tree $T'$ is
different from the root $r$ of $\mathcal{D}$. Hence, $c$ has a parent node in $\mathcal{D}$. 
Let us refer to this parent node as $p$. 
If function \Decomp has not already returned ``true'' before, it will eventually
try $\lambda(p)$ in the foreach-statement on line 16. Due to the normal form of $\mathcal{D}$,
all of $H'$ is a single $[\lambda(p)]$-component. Hence, the if-condition on line 18 is satisfied
and $comp_{\low}$ is assigned all of $H'$ on line 19.
The connectedness check on line 22 succeeds, since $p$ is the parent 
of the root of $\mathcal{D}'$ and $\Con = V(H') \cap \bigcup \lambda(p)$ holds by
the last condition of the definition of induced subhypergraphs.
Hence, the foreach-loop on lines 24 -- 39 is eventually entered.  
If function \Decomp does not return ``true'' before, it will eventually
try $\lambda(c)$ in the foreach-statement on line 24.
Then $\chi(c)$ assigned on line 25 is the correct $\chi$-label of $c$ according to the normal form.
The connectedness check on line 26 succeeds since $\mathcal{D}$ satisfies the connectedness
condition.
By assumption, $c$ is a balanced separator; hence also the check on line 29 succeeds. 
Thus, the foreach-loop on lines 31 --34 is executed. It is easy to verify that the parameters 
supplied to \Decomp in the recursive calls on line 33 correspond to induced subhypergraphs. 
Therefore, all these calls of \Decomp return ``true'' by the induction hypothesis. Hence, also the 
statements on lines 35 -- 37 are executed. 
In this case, since $comp_{\low}$ comprises all edges and special edges of $H'$, 
\Decomp is called on line 37 with $comp_{up}.E = \emptyset$ and 
$comp_{up}.\Sp = \{\chi(c)\}$. Hence, as was shown in the induction begin, 
this call of \Decomp  returns ``true''.
Therefore, the return-statement on line 39 is executed and the overall result ``true'' is 
returned by function \Decomp.

\smallskip

\noindent
{\em Case 2.} Suppose that $c$ is not the root node of $\mathcal{D}'$. 
Then $c$ has a parent node inside  $\mathcal{D}'$. 
Let us refer to this parent node as $p$. 
If function \Decomp has not already returned ``true'' before, it will eventually
try $\lambda(p)$ in the foreach-statement on line 16. 
By Corollary~\ref{cor:chi-vs-lambda}, one of the $[\lambda(p)]$-components of $H'$ 
is the $[\chi(p)]$-component consisting of the edges and special edges which are covered 
in $\mathcal{D}$ by the subtree rooted at child node $c$ of $p$.
The check on line 18 is successful because the child $c$ of $p$ is a balanced separator. Hence 
$c$ and the subtrees below $c$ cover more than half of the edges and special edges. 
The check on line 22 succeeds because of the connectedness condition in $\mathcal{D}$.
Hence, the foreach-loop on lines 24 -- 39 is eventually entered.  
If function \Decomp does not return ``true'' before, it will eventually
try $\lambda(c)$ in the foreach-statement on line 24.
Then $\chi(c)$ assigned on line 25 is the correct $\chi$-label of $c$ according to the normal form.
The connectedness check on line 26 succeeds since $\mathcal{D}$ satisfies the connectedness
condition. By assumption, $c$ is a balanced separator; hence also the check on line 29 succeeds. 
Thus, the foreach-loop on lines 31 --34 is executed. It is easy to verify that the parameters 
supplied to \Decomp in the recursive calls on line 33 correspond to induced subhypergraphs. 
Hence, all these calls of \Decomp return ``true'' by the induction hypothesis. Hence, also the 
statements on lines 35 -- 37 are executed. 
Again, it is easy to verify that also 
the parameters 
supplied to \Decomp in the recursive call on line 37 correspond to an induced subhypergraph. 
Hence, by the induction hypothesis, also 
this call of \Decomp  returns ``true''.
Therefore, the return-statement on line 39 is executed and the overall result ``true'' is 
returned by function \Decomp. 
\end{proof}

\section{An Illustrative Example}
\label{sect:beispiel}
To illustrate the notions introduced in Section~\ref{sect:basics} 
and the  basic algorithm \logk shown in Algorithm \ref{alg:parHD},
we consider the hypergraph $H = (V,E)$ with 
$V = \{x_1, \dots, x_{10}\}$ and 
\begin{align*}
		E = \{&R_1(x_1,x_2), \\ 
&R_2(x_2,x_3), \\ 
&R_3(x_3,x_4), \\ 
&R_4(x_4,x_5), \\ 
&R_5(x_5,x_6), \\ 
&R_6(x_6,x_7), \\ 
&R_7(x_7,x_8), \\ 
&R_8(x_8,x_9), \\ 
&R_9(x_9,x_{10}), \\ 
&R_{10}(x_{10},x_1)\}
\end{align*}
In other words, $H$ is a essentially a cycle of size 10.
A hypertree decomposition $\mathcal{D}$ of $H$ is shown in 
Figure \ref{fig:beispiel}.

\begin{figure*}

\begin{subfigure}[b]{0.3\textwidth}
\centering
\begin{tikzpicture}
[sibling distance=10em,level distance=4em,
every node/.style={shape=rectangle,draw,align=center}]
\node{
	\begin{tabular}{rc}
	$u_1$:    & \multirow{2}{*}{
	$\begin{aligned}
	\lambda &= \{R_1, R_2 \} \\
	\chi &=  \{ x_1,x_2,x_3 \}
	\end{aligned}$
	}\\
	\\
	\end{tabular}
}
child{
	node{
		\begin{tabular}{r c}
		$u_2$:    & \multirow{2}{*}{
		$\begin{aligned}
		\lambda &= \{R_1, R_3 \} \\
		\chi &=  \{ x_1,x_3,x_4 \}
		\end{aligned}$
		}\\
		 \\
		\end{tabular}
	}
	child{
		node{
			\begin{tabular}{r c}
			$u_3$:    & \multirow{2}{*}{
			$\begin{aligned}
			\lambda &= \{R_1, R_4 \} \\
			\chi &=  \{ x_1, x_4,x_5 \}
			\end{aligned}$
				}\\
			\\
			\end{tabular}
		}
		child{
			node{
				\begin{tabular}{r c}
				$u_4$:    & \multirow{2}{*}{
				$\begin{aligned}
				\lambda &= \{R_1, R_5 \}  \\
				\chi &=  \{ x_1, x_5,x_6 \}
				\end{aligned}$
				}\\
				\\
				\end{tabular}
			}
			child{
				node{
					\begin{tabular}{r c}
					$u_5$:    & \multirow{2}{*}{
					$\begin{aligned}
					\lambda &= \{R_1, R_6 \}  \\
					\chi &=  \{ x_1, x_6,x_7 \}
					\end{aligned}$
					}\\
					\\
					\end{tabular}
					}
				child{
					node{
						\begin{tabular}{r c}
						$u_6$:    & \multirow{2}{*}{
						$\begin{aligned}
						\lambda &= \{R_1, R_7 \}  \\
						\chi &= \{ x_1, x_7,x_8 \}
						\end{aligned}$
							}\\
						\\
						\end{tabular}
						}
					child{
						node{
							\begin{tabular}{r c}
							$u_7$:    & \multirow{2}{*}{
							$\begin{aligned}
							\lambda &= \{R_1, R_8 \}  \\
							\chi &=  \{ x_1, x_8,x_9 \}
							\end{aligned}$
							}\\
							\\
							\end{tabular}
						}
						child{
							node{
								\begin{tabular}{r c}
								$u_8$:    & \multirow{2}{*}{
								$\begin{aligned}
								\lambda &= \{R_1,R_9 \} \\
								\chi &=  \{ x_1, x_9,x_{10} \}
								\end{aligned}$
								}\\
								\\
								\end{tabular}
							}
						}
					}
				}
			}
		}
	}
};
\end{tikzpicture}

\caption{HD $\mathcal{D}$ of hypergraph $H$ from Section \ref{sect:beispiel}}
\label{fig:beispiel}
\end{subfigure}
\hspace{5mm}
\begin{subfigure}[b]{0.3\textwidth}
\centering
\begin{tikzpicture}
[sibling distance=10em,level distance=4em,
every node/.style={shape=rectangle,draw,align=center}]
\node{
		$\begin{aligned}
		\lambda &= \{R_1, R_7 \}  \\
		\chi &=  \{ x_1, x_7,x_8 \}
		\end{aligned}$		
	}
	child{
		node{
				$\begin{aligned}
				\lambda &= \{R_1, R_8 \}  \\
				\chi &=  \{ x_1, x_8,x_9 \}
				\end{aligned}$		
			}
			child{
				node{
					$\begin{aligned}
					\lambda &= \{R_1, R_9 \}  \\
					\chi &=  \{ x_1, x_9,x_{10} \}
					\end{aligned}$				
				}
			}
	};
\end{tikzpicture}

\caption{HD-fragment $\mathcal{D}_{1.1}$ implicitly constructed by Call 1.1 of function \Decomp}
\label{fig:callOneOne}
\end{subfigure}
\hspace{5mm}
\begin{subfigure}[b]{0.3\textwidth}
\centering

\begin{tikzpicture}
[sibling distance=10em,level distance=4em,
every node/.style={shape=rectangle,draw,align=center}]
\node{
	$\begin{aligned}
	\lambda &= \{R_1, R_3 \} \\
	\chi &=  \{ x_1, x_3,x_4 \}
	\end{aligned}$
}	
	child{
		node{
		$\begin{aligned}
		\lambda &= \{R_1, R_4 \} \\
		\chi &=  \{ x_1, x_4,x_5 \}
		\end{aligned}$
	}	
	child{
		node{
			$\begin{aligned}
			\lambda &= \{R_1, R_5 \}  \\
			\chi &=  \{ x_1, x_5,x_6 \}
			\end{aligned}$		
		}
		child{
			node{
				$\begin{aligned}
				\lambda &= \{ s_1\}  \\
				\chi &=  \{ x_1, x_6,x_7 \}
				\end{aligned}$				
			}
		}
	}
};
\end{tikzpicture}

\caption{HD-fragment $\mathcal{D}_{1.2}$  implicitly constructed by Call 1.2 of function \Decomp}
\label{fig:callOneTwo}
\end{subfigure}

\caption{Visualisations of the HD constructed as part of Section~\ref{sect:beispiel} and the HD-fragments used for its construction.}
\end{figure*}
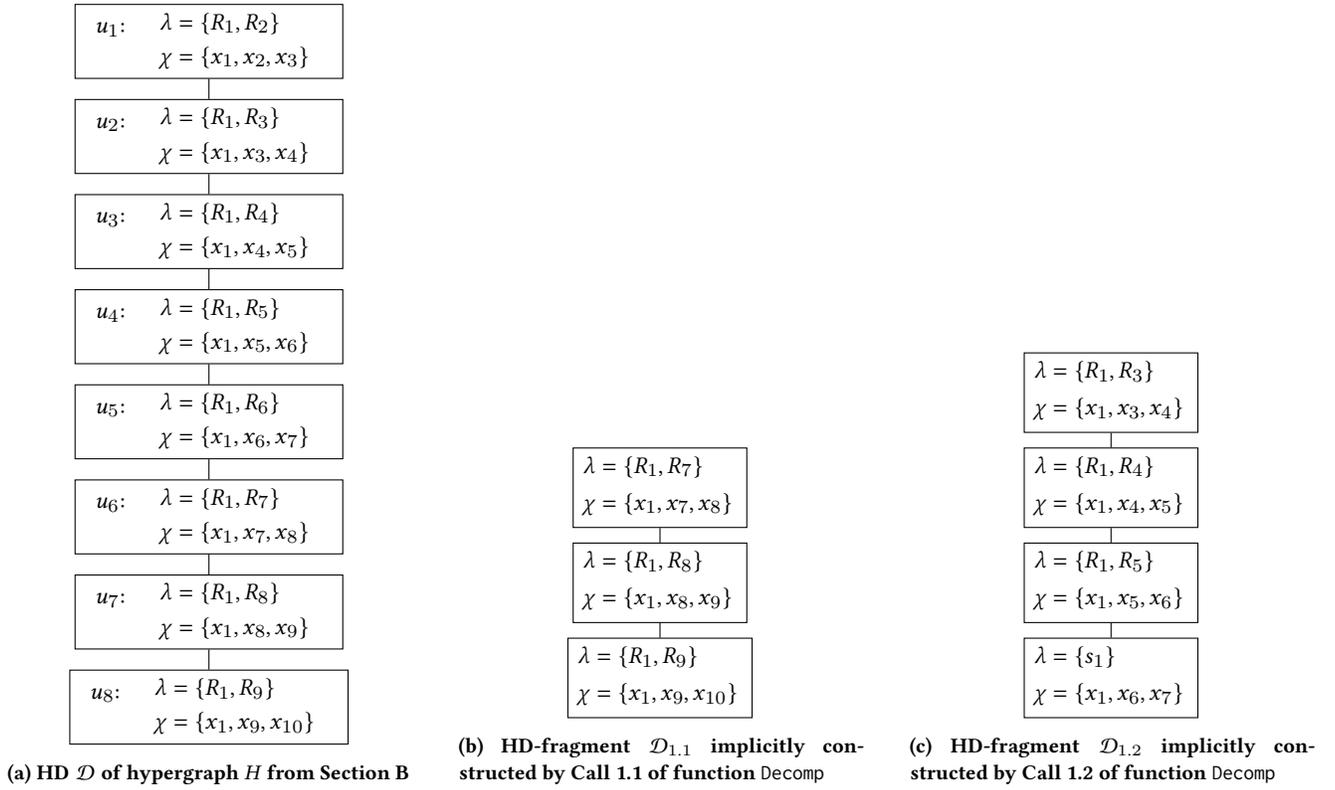

We now walk through algorithm \logk, which will allow us to 
see also  the notions from Section~\ref{sect:basics} in action.
Suppose that we run \logk with 
hypergraph $H$ and parameter $k = 2$.
In each of the loops on lines 3, 16, and 24, the algorithm searches for a $\lambda$-label until it finds a successful one. By ``succsesful'' we mean that the current execution of the main program or of function \Decomp returns true (on line 9 or 39, respectively). To keep things simple
in our discussion below, 
we will directly choose a successful one with the understanding, that this 
particular $\lambda$-label will eventually be selected by the program unless another successful one has already been found before.

\medskip

\noindent
{\em Main program.}
The RootLoop tests all possible candidates for $\lambda_r$. 
It will eventually try
$\lambda_r = \{R_1, R_2\}$ (as mentioned above -- with the understanding that 
it has not found another successful one before). 
There is only one $[\lambda_r]$-component $y$, namely 
$y.E = \{R_3, \dots, R_{10}\}$ and $y.Sp = \emptyset$.
Moreover, on line 6, we set $Conn_y = \{x_1, x_3\}$
and call function \Decomp for this combination of 
component $y$ and vertex set $Conn_y$.

\medskip

\noindent
{\em Call 1 of function \Decomp}
with parameters $H'.E = 
\{R_3, \dots, R_{10}\}$, 
$H'.Sp = \emptyset$, and
$Conn = \{x_1, x_3\}$.
Since none of the conditions of the base case is satisfied,
the ParentLoop will be entered. 
It will eventually try
$\lambda_p = \{R_1, R_5\}$. 
This splits $H'$ into 2 components
$c_1 = \{R_3,R_4\}$ and 
$c_2 = \{R_6,R_7,R_8,R_9,R_{10}\}$.
Component $c_2$ satisfies the size constraint on line 18 and,
therefore, becomes $comp_{down}$ on line 19.

The ChildLoop will eventually try $\lambda_c =  \{R_1, R_6\}$.
On line 25, we thus set $\chi_c = \{x_1, x_6, x_7\}$.
This gives rise to a single component 
$comps_p[i]$
inside $comp_{down}$, namely 
$comps_p[i] = \{R_7,R_8,R_9,R_{10}\}$. 
This component satisfies both the 
constraint for connectedness (line 26) and the size constraint (line 29). 
Moreover, it leads to 2 recursive calls of function \Decomp:
one for the component 
$comps_p[i] = \{R_7,R_8,R_9,R_{10}\}$
``below'' the ``child node'' (on line 33) and one for the component 
$c_1 = \{R_3,R_4,R_5\}$  ``above'' (on line 37). 
For the component ``below'', we have
$x.E = \{R_7,R_8,R_9,R_{10}\}$ and 
$x.Sp = \emptyset$; moreover, we 
set $Conn_x = \{x_1, x_7\}$ on line 32. For the component ``above'',
we set $comp_{up}.E = \{R_3,R_4,R_5\}$ on line 35 and 
$comp_{up}.Sp = \{s_1\}$ with $s_1 = \chi_c = \{x_1, x_6,x_7\}$ on line 36;
moreover, we take $Conn  = \{x_1, x_3\}$ 
from the current call of \Decomp.
Clearly, edge $R_6$ from $H'$ is already 
covered by $\chi_c = \{x_1, x_6, x_7\}$ 
and does not need to be further considered.

As we shall work out next,  Call~1.1 of function \Decomp
for the component ``below'' will return true 
based on the HD-fragment  $\mathcal{D}_{1.1}$  shown in 
Figure~\ref{fig:callOneOne}. Likewise, 
Call~1.2 of function \Decomp
for the component ``above'' will return true 
based on the HD-fragment  $\mathcal{D}_{1.2}$  shown in 
Figure~\ref{fig:callOneTwo}. The leaf node of 
$\mathcal{D}_{1.2}$ contains the  special edge $s_1$, which 
acts as a placeholder for the node $c$ with
labels 
$\lambda_c =  \{R_1, R_6\}$ 
and $\chi_c = \{x_1, x_6, x_7\}$ 
from the current call of \Decomp. 

The HD-fragment $\mathcal{D}_{1}$  of the successful Call~1
of function \Decomp is then obtained 
by taking  HD-fragment $\mathcal{D}_{1.2}$, 
replacing the leaf node with $\lambda$-label 
$\{s_1\}$ by the node $c$ with 
$\lambda_c = \{R_1,R_6\}$ 
and $\chi_c = \{x_1, x_6, x_7\}$ and
appending the HD-fragment $\mathcal{D}_{1.1}$
below this node $c$.
In other words, 
the HD-fragment $\mathcal{D}_{1}$ 
is
precisely 
HD $\mathcal{D}$ minus its root node, i.e., 
nodes $u_2$ -- $u_8$ of $\mathcal{D}$. 
The final HD $\mathcal{D}$ 
(shown in see Figure \ref{fig:beispiel})
is obtained by combining the root node
(whose $\lambda$-label was determined in the RootLoop of the 
main program, line 3) with 
HD-fragment $\mathcal{D}_{1}$.

\medskip

\noindent
{\em Call 1.1 of function \Decomp}
with parameters $H'.E = \{R_7,R_8,R_9,R_{10}\}$,
$H'.Sp = \emptyset$, and $Conn = \{x_1, x_7\}$. 
The ParentLoop (line 16) and the ChildLoop 
(line 24) will 
eventually choose $\lambda_p = \{ R_1,R_7\}$
and $\lambda_c = \{R_1,R_8\}$, respectively. Then function \Decomp is 
called recursively with parameters 
$x.E = \{R_9,R_{10}\}$, $x.Sp = \emptyset$, 
and $Conn_x = \{x_1,x_9\}$ on line 33 for the only component ``below''
and with parameters 
$comp_{up}.E = \{R_7\}$, 
$comp_{up}.Sp = \{s_2\}$ with $s_2 = \{x_1, x_8,x_9\}$,
and $Conn = \{x_1,x_7\}$ on line 37 for the component ``above''.

\medskip

\noindent
{\em Call 1.1.1 of function \Decomp}
with parameters 
$H'.E = \{R_9,R_{10}\}$, $H'.Sp = \emptyset$, 
and $Conn = \{x_1,x_9\}$.
This call immediately returns true 
since we have reached the base case on lines 12 - 13. The corresponding 
HD-fragment $\mathcal{D}_{1.1.1}$
consists of a single node with 
$\lambda$-label $\{R_9, R_{10}\}$.

\medskip

\noindent
{\em Call 1.1.2 of function \Decomp}
with parameters 
$H'.E = \{R_7\}$, 
$H'.Sp = \{s_2\}$ with $s_2 = \{x_1, x_8,x_9\}$,
and $Conn = \{x_1,x_7\}$.
In the ParentLoop, eventually, 
$\lambda_p = \{ R_1,R_6\}$ 
will be chosen on line 16.
In this case, 
$comp_{down}$ is actually all of $H'$ and
it clearly satisfies the size constraint on line 18.
In the ChildLoop, 
$\lambda_c = \{R_1,R_7\}$ will eventually be chosen
on line 24.
It gives rise to $\chi_c = \{x_1, x_7, x_8\}$
on line 25 with a single $[\chi_c]$-component
$comps_c[i] = \{s_2\}$.
Function  \Decomp 
will therefore be called 
on line 33 with parameters 
$x.E = \emptyset$, 
$x.Sp = \{s_2\}$, 
and $Conn_x = \{x_1,x_8\}$. 
This call (referred to as Call 1.1.2.1) 
returns true since 
we now have the base case on lines 14 - 15.

The recursive call of function \Decomp on 
line 37 for the ``components above'' 
is a special case where there are actually no such ``components above'' left. Hence, in 
this case, we have
$comp_{up}.E = \emptyset$, 
$comp_{up}.Sp = \{s_3\}$
with $s_3 = \chi_c = \{x_1, x_7, x_8\}$, and 
$Conn = \{x_1,x_7\}$.
This leads to the Call 1.1.2.2 of function \Decomp, 
which returns true since 
we again have the base case on lines 14 - 15.

In total, Call~1.1.2 of function \Decomp
is successful and the corresponding 
HD-fragment $\mathcal{D}_{1.1.2}$
consists of 2 nodes: the root node with 
$\lambda$-label $\{R_1,R_7\}$ and its child
node with $\lambda$-label $\{s_2\}$

We can now also construct the 
HD-fragment $\mathcal{D}_{1.1}$
of the successful Call~1.1 of function \Decomp.
More precisely, HD-fragment $\mathcal{D}_{1.1}$ is obtained
by taking  HD-fragment $\mathcal{D}_{1.1.2}$, 
replacing the leaf node with $\lambda$-label 
$\{s_2\}$ by the node $c$ with 
$\lambda_c = \{R_1,R_8\}$ 
and $\chi_c = \{x_1, x_8, x_9\}$ from Call~1.1 and
appending the HD-fragment $\mathcal{D}_{1.1.1}$
below this node $c$.
The resulting 
HD-fragment $\mathcal{D}_{1.1}$
is shown in Figure~\ref{fig:callOneOne}. 
That is, $\mathcal{D}_{1.1}$ is the 
subtree consisting of the bottom 3 nodes 
$u_6$, $u_7$, and $u_8$, 
of the final HD $\mathcal{D}$ displayed in 
Figure \ref{fig:beispiel}.

\medskip

\noindent
{\em Call 1.2 of function \Decomp}
with parameters 
$H'.E = \{R_3, R_4, R_5\}$, 
$H'.Sp = \{s_1\}$ with $s_1 = \{x_1, x_6,x_7\}$,
and $Conn = \{x_1, x_3\}$.
The execution of this function call is very similar to the calls discussed above. Below, we therefore 
do not discuss in detail the remaining recursive calls inside Call~1.2. Instead, we only list for each such call the parameters, the balanced separator $\chi_c$, and the corresponding HD-fragments.

In Call 1.2 of function \Decomp, eventually the 
balanced separator with $\lambda_c = \{R_1, R_4\}$ 
and $\chi_c = \{x_1,x_4, x_5\}$ 
will be chosen, which gives rise to the 
recursive Calls~1.2.1 (line 33) and 
1.2.2 (line 37) of function \Decomp, which 
we briefly discuss below.

\medskip

\noindent
{\em Call 1.2.1 of function \Decomp}
with parameters 
$H'.E = \{R_5\}$, 
$H'.Sp = \{s_1\}$ with $s_1 = \{x_1, x_6,x_7\}$,
and $Conn = \{x_1, x_5\}$.
As in the Call~1.1.2, we now have an extended
subhypergraph of $H$ consisting of a single 
edge and a single special edge. 
Analogously to Call~1.1.2, also 
Call~1.2.1 returns true and the
corresponding 
HD-fragment $\mathcal{D}_{1.2.1}$
consists of 2 nodes: the root node with 
$\lambda$-label $\{R_1,R_5\}$ and its child
node with $\lambda$-label $\{s_1\}$.

\medskip

\noindent
{\em Call 1.2.2 of function \Decomp}
with parameters 
$H'.E = \{R_3\}$, 
$H'.Sp = \{s_4\}$ with $s_4 = \{x_1, x_4,x_5\}$,
and $Conn = \{x_1, x_3\}$.
Again, we have an extended
subhypergraph of $H$ consisting of a single 
edge and a single special edge. 
Analogously to the Calls~1.1.2 and 1.2.1, 
also Call~1.2.2 returns true and the
corresponding 
HD-fragment $\mathcal{D}_{1.2.2}$
consists of 2 nodes: the root node with 
$\lambda$-label $\{R_1,R_3\}$ and its child
node with $\lambda$-label $\{s_4\}$.

We can now construct the 
HD-fragment $\mathcal{D}_{1.2}$
of the successful Call~1.2 
by taking  HD-fragment $\mathcal{D}_{1.2.2}$, 
replacing the leaf node with $\lambda$-label 
$\{s_4\}$ by the node $c$ with 
$\lambda_c = \{R_1,R_4\}$ 
and $\chi_c = \{x_1, x_4, x_5\}$ from Call~1.2 and
appending the HD-fragment $\mathcal{D}_{1.2.1}$
below this node $c$.
The resulting 
HD-fragment $\mathcal{D}_{1.2}$
is shown in Figure~\ref{fig:callOneTwo}.

\section{Further Combinatorial Observations and Optimisations}
\label{sect:improvements}

As was shown in Theorem \ref{theo:recursion-depth},
algorithm \logk introduced in Section~\ref{sect:Algorithm:ShortForm}
reaches the primary goal of splitting the HD-construction
into subtasks with guaranteed upper bound on their size. In theory, this is
enough to support parallelism. However, this basic algorithm still leaves a lot of room for further improvements.
In this section, we present several optimisations, which are crucial to achieve good performance in practice. 
The line numbers below refer to Algorithm~\ref{alg:parHD}. 
However, in~Algorithm~\ref{alg:parHDOpt}, 
we will ultimately also give the pseudo-code for the  enhanced algorithm where all the optimisations mentioned below are included.

\smallskip

\noindent
{\bf Extension of the base case.} The recursive function \Decomp starts (on lines 12 -- 15)  
with some simple checks that immediately give a ``true'' answer. In contrast, a ``false'' answer
is only obtained in case of unsuccessful execution of the entire procedure. We could 
add the following negative case to the top of the procedure:  if $H'.E = \emptyset$, then $|H'.\Sp| \leq 1$ must hold.
The rationale of this condition is that,  if there are no more edges in $H'.E$, then 
we would have to use only  ``old'' edges (i.e., edges covered already at some 
node further up in the HD) in the $\lambda$-label to separate the remaining 
special edges. However, a $\lambda$-label consisting of ``old'' edges only is not allowed, 
since this would violate the second condition of the normal form in Definition~\ref{def:normalform}
(i.e., "some progress has to be made").

\smallskip

\noindent
{\bf Root of the HD-fragment.} In the current form of procedure \Decomp,  we always ``guess''
a pair $(p,c)$ of nodes, such that $p$ is the parent of $c$.
This also covers the case that $c$ is the root node
of the HD-fragment for the current extended subhypergraph. In this case, the
parent node $p$ would actually be the node immediately above this HD-fragment 
(in other words, $p$ was the node from which the current
call of \Decomp happened). However, it would be more efficient to consider the case of ``guessing'' the root
node of this HD-fragment explicitly. More precisely, we would thus first check for the
label $\lambda_p$ guessed in \Decomp on line 16 (which, in the current version of the algorithm, is automatically
treated as the ``parent'') if all $[\lambda_p]$-components have at most half the size of the current
extended subhypergraph.  

If this is the case, then we may use this node as the root of the HD-fragment to cover the current extended
subhypergraph. This makes sense since it corresponds precisely to the ``search'' for a balanced
separator in the proof of Lemma~\ref{lem:existence-balsep}. That is, if the root of the HD gives
rise to components which are all at most half the size, then the root {\em is\/} the desired
balanced separator.

If this is not the case, then we simply proceed with procedure \Decomp in its present form, i.e.: 
there exists exactly one $[\lambda_p]$-com\-po\-nent whose size is
bigger than half. So we take the guessed node as the parent and search 
for a balanced separator as a child of $p$ in the direction of this 
oversized $[\lambda_p]$-component.

\smallskip

\noindent
{\bf Allowed edges.} The main task of procedure \Decomp is to compute labels (i.e., edge sets) 
$\lambda_p$ and $\lambda_c$ of nodes $p,c$, which will ultimately be in a parent-child relationship
in the HD. 
For these labels, Algorithm~\ref{alg:parHD} imposes no restriction. That is, 
in principle, we would try all possible sets of $\leq k$ edges for these labels. However, not all edges 
actually make sense. We should thus add one more parameter to procedure \Decomp indicating
the edges that are allowed in a $\lambda$-label of the HD-fragment for this extended
subhypergraph. 

More specifically, in our search for the $\lambda$-label of some node $u$, 
we may exclude from the HD of 
the extended subhypergraph $comp_{\up}$ (i.e., in the recursive call of function 
$\Decomp$ on line 37) all edges
which are part of some component ``below'' $u$.  The rationale of this restriction is that, 
by the special condition, using a ``new'' edge in a $\lambda$-label 
forces us to add all its vertices to the $\chi$-label, i.e.: it is fully covered in such a node. But then it cannot be
part of a component whose edges are covered for the first time further down in the tree.

Note that we can yet further restrict the search for the label $\lambda_c$ by requiring that at least one
edge must be from $H'.E$, since choosing only ``old'' edges would violate the second condition of the normal form.
As far as the label $\lambda_p$ is concerned, the same kind of restriction can be applied if we first implement 
the previous optimisation of handling the root node of the current HD-fragment separately. 
If we indeed have to guess the labels $\lambda_p$ and 
$\lambda_c$ of {\em two\/} nodes $p$ and $c$ 
(i.e., the label $\lambda_p$ guessed first was not a balanced separator), 
then both nodes $p$ and $c$ are {\em inside\/} the current HD-fragment. 
Hence, also the label $\lambda_p$ must contain at least one ``new'' edge.

\smallskip

\noindent
{\bf No special treatment of the root of the HD.} In the current form of the algorithm,
 we start in the main program by ``guessing'' the label $\lambda_r$ of the root of the HD on line 3 and then branch into 
 calls of procedure \Decomp  for each $[\lambda_r]$-component. Of course, there is no guarantee that this 
 $\lambda$-label is a balanced separator. 
 Consequently, there is no guarantee that the size of {\em all\/} HD-fragments to be constructed in these calls of 
 procedure \Decomp is significantly smaller than the entire HD. 
 
 In order to start with a balanced separator right from the beginning, 
 we may instead call procedure \Decomp straight away with parameters $H'.E = H$, $H'.\Sp = \emptyset$, and
 $\Con = \emptyset$. We  thus treat the search for the very first $\lambda$-label in the desired HD in 
 exactly the same way as for any other HD-fragment. As far as the above mentioned optimisation of 
 restricting the allowed edges is concerned, of course all edges of $H$ would be initially allowed.
 
\smallskip

\noindent
{\bf Searching for  child nodes first.} In Algorithm~\ref{alg:parHD}, we first look for
 $\lambda$-labels of potential parent nodes, and consider afterwards the $\lambda$-labels
 of potential child nodes. Only then do we check if the $\chi$-label of the child is a balanced
 separator of the current subcomponent. We have observed that in many hypergraphs of HyperBench, balanced
 separators are rare, in the sense that only a small part of the search space will ever fulfil the
 properties required. 
 Therefore we should first look for a potential child s.t. its $\lambda$-label is a balanced
 separator, and only afterwards try to find a fitting parent. While this may seem slightly 
 unintuitive, it allows us to quickly detect cases where no balanced separator can be found at all. 
 
 Note that we can determine the precise bag $\chi_c$ for a child $c$ only when we know the $\lambda$-label
 of its parent. Nevertheless, even if we only have $\lambda_c$, we can over-approximate the $\chi_c$-label 
 as $\bigcup \lambda_c$. Hence, if $\bigcup \lambda_c$ is not a balanced separator, 
 then we may clearly conclude that  neither is $\chi_c$.

Finally, note that by searching for the child node first, we get the above 
described optimisation of treating the ``Root of the HD-fragment'' separately almost for free. Indeed, when 
computing $\lambda_c$, we can immediately check if $Conn \subseteq \bigcup \lambda_c$ holds. 
Recall that $Conn$ constitutes the interface to the HD-fragment {\em above\/} the current one. Hence, 
if $\bigcup \lambda_c$ fully covers this interface, $c$ is in fact the root node of the current HD-fragment.

\smallskip

\noindent
{\bf Speeding up the search for parent $\mathbf{\lambda}$-labels.} The previous optimisation means that, 
after having found a $\lambda$-label $\lambda_c$ for the child which is a balanced separator of the current 
subcomponent, we need to find a {\em suitable\/} $\lambda$-label of the parent.  
By ``suitable'' we mean that 
we may limit ourselves to edges which have a non-empty intersection 
with $\bigcup \lambda_c$. A very high-level explanation why we may exclude edges $e$ 
with $e \cap \bigcup \lambda_c = \emptyset$
from the 
search space of $\lambda_p$ is that the control flow of  function \Decomp is 
mainly determined by the edges and special edges covered by $\bigcup \lambda_c$ and 
the $[\lambda_c]$-components {\em below} $c$. By the connectedness condition, 
if $e$ is covered {\em above\/} $c$ and has empty intersection with $\bigcup \lambda_c$, 
then excluding or including $e$ in $\lambda_p$ has no 
effect on the $[\lambda_c]$-components {\em below} $c$.
In our experimental evaluation, we found that this restriction indeed significantly reduces the 
time it takes to 
either find a {\em suitable\/} $\lambda_p$, or detect that no such $\lambda$-label exists. 
Of course, this restriction of the search space cannot destroy soundness.  We will show below that 
also the completeness of the algorithm is preserved.

\begin{theorem}
\label{theo:algorithm-improved}
The optimised \logk algorithm for checking if a hypergraph $H$ has $\hw(H) \leq k$ 
given in 
Algorithm~\ref{alg:parHDOpt}
is sound and complete. More specifically, 
for given hypergraph $H$ and integer $k \geq 1$, 
the algorithm  returns ``true'' if and only if there exists an 
HD of $H$ of width $\leq k$. 
Moreover, by materialising the decompositions implicitly constructed in the recursive calls of the 
\Decomp function, an HD of $H$ of width $\leq k$ can be constructed in polynomial time 
in case of a successful computation (i.e., return-value ``true'').
\end{theorem}

\begin{proof}
The soundness and completeness of Algorithm~\ref{alg:parHDOpt} follow almost immediately from the 
soundness and completeness of Algorithm~\ref{alg:parHD} together with the above explanations of the various optimisations. 
Likewise, the polynomial-time upper bound on the time needed to construct an HD in case of a successful computation
can again be easily shown as part of the soundness proof. 
The only non-trivial part is that the last optimisation (i.e., the restriction of the search space for 
$\lambda(p)$) does not destroy the completeness of the algorithm. 
The remainder of the proof will concentrate on this aspect.

Assume that hypergraph $H$ has an HD of width $\leq k$. Then the optimised \logk algorithm without the 
restriction on the search space for label $\lambda_p$ (on line 22) returns the overall result {\em true\/}. 
This is due to the fact that,
as was argued in Section \ref{sect:improvements}, the other optimisations mentioned there do not affect 
the completeness of the algorithm.
Now consider a recursive call of function \Decomp and suppose that it returns true if the restriction 
on the search space for label $\lambda_p$ is dropped.
Of  course, if the value {\em true\/} is returned in one of the base cases (lines 6 or 8) or 
if $\lambda_c$ turns out to be the $\lambda$-label of the root node of the current HD-fragment (and {\em true\/}
is returned on line 21), then the restriction of the search space for $\lambda_p$ has no effect at all. 
Hence, the only interesting case to consider is that the parent loop (lines 22 -- 43) is indeed executed. 

Let $\lambda_p$ be the $\lambda$-label chosen on line 22 if no restriction is imposed on the search space.
We claim that we may remove from $\lambda_p$ all edges that have an empty intersection with 
$\bigcup \lambda_c$ without altering the control flow of this particular execution of function $\Decomp$.
Actually, it suffices to show that we may remove {\em one\/} edge $e$ with 
an empty intersection with $\bigcup \lambda_c$ from $\lambda_p$
without altering the control flow of this particular execution of function $\Decomp$.
Then the claim follows by an easy induction argument.

So suppose that $\lambda_p$ contains at least one edge $e$ such that $e \cap \bigcup \lambda_c = \emptyset$
and let $\lambda'_p = \lambda_p \setminus \{e\}$. 
An inspection of the code of the parent loop reveals that it suffices to show that
this elimination of edge $e$ from $\lambda_p$ leaves $comp_{\low}$ unchanged.
Indeed, if $comp_{\low}$ is still a $[\lambda'_p]$-component, say the $i$-th $[\lambda'_p]$-component, 
then the if-condition on line 24 is true. Of course, there can be only one $[\lambda'_p]$-component satisfying the
condition $comps_p[i] |  >  \frac{|H'|}{2}$. 
Hence, on line 25, for this particular $i$, exactly the same value is assigned to $comp_{\low}$ for $\lambda'_p$ as for 
$\lambda_p$. But then also $\chi_c$ on line 28 gets the same value as without the restriction on the search 
space of $\lambda_p$. Consequently, also the $[\chi_c]$-components computed on line 33 and the 
parameters supplied to the recursive calls of function \Decomp (on lines 36 and 41) remain the same  
as without the restriction on the search space. Hence, function \Decomp will ultimately return the value {\em true\/}
also if we choose $\lambda'_p$ on line 22.

It remains to show that $\lambda_p$  and $\lambda'_p$ indeed give rise to the same 
component $comp_{\low}$. To avoid confusion, let us write $comp_{\low}$ to denote 
a $[\lambda_p]$-component and $comp'_{\low}$ to denote a $[\lambda'_p]$-com\-ponent.
Let $comp_{\low}$ be the unique $[\lambda_p]$-component 
that satisfies the condition $| comps_p[i] |  >  \frac{|H'\!|}{2}$ on line 24.
We have $\lambda'_p \subseteq \lambda_p$. Decreasing a set can only increase the corresponding 
components. Hence, there exists a $[\lambda'_p]$-component, call it $comp'_{\low}$ with 
$comp_{\low} \subseteq comp'_{\low}$. 
We have to show that $comp_{\low} = comp'_{\low}$ holds.

The set $comp_{\low}$ consists of  the edges and special edges 
of the $[\chi_c]$-components contained in $comp_{\low}$ (denoted as $comps_c$ in the algorithm), 
and the edges and special edges covered by $\chi_c$. Let us refer to these $[\chi_c]$-components as
$C_1, \dots, C_\ell$. By Corollary~\ref{cor:chi-vs-lambda}, these $[\chi(c)]$-components
are at the same time the $[\lambda_c]$\-components contained in $comp_{\low}$.
And the edges and special edges covered by $\chi_c$ are of course also 
covered by~$\bigcup \lambda_c$.
Likewise, 
$comp'_{\low}$ consists of the (special) edges 
of the $[\lambda_c]$\-components contained in $comp'_{\low}$ 
plus the (special) edges covered by $\lambda_c$. 

By $comp_{\low} \subseteq comp'_{\low}$, all $[\lambda_c]$-components $C_1, \dots, C_\ell$
contained in $comp_{\low}$ 
are of course also contained in $comp'_{\low}$. We have to show that there is no further  
$[\lambda_c]$-component contained in $comp'_{\low}$.
Assume to the contrary that there exists a $[\lambda_c]$-component $C'$ in $comp'_{\low}$
such that $C'$ is not in $comp_{\low}$. By definition, $comp_{\low}$ is $[\lambda_p]$\-connected
while $comp'_{\low}$ is $[\lambda'_p]$\-connected. Hence, there exist (possibly special) edges 
$f' \in C'$ and $f \in C_i$ for some $i \in \{1, \dots, \ell\}$, such that there is a 
path $\pi$ (represented as a sequence of edges) with 
$\pi = (f_0, f_1, \dots, f_m)$, such that 
$f = f_0$, $f' = f_m$, and 
$(f_\alpha \cap f_{\alpha +1}) \setminus \bigcup \lambda'_p \neq \emptyset$ for every 
$\alpha \in \{0, \dots, m-1\}$. W.l.o.g., choose $f$, $f'$, and $\pi$ such that $m$ is minimal. 
Since $f$ and $f'$ are not $[\lambda_p]$-connected,  there exists $\alpha$ with 
$f_\alpha \cap f_{\alpha +1} \cap e \neq \emptyset$ while 
$(f_\alpha \cap f_{\alpha +1}) \setminus \bigcup \lambda_p = \emptyset$.

Since all (special) edges in  $comp'_{\low}$ are either in some 
$[\lambda_c]$-component contained in $comp'_{\low}$ 
or covered by $\bigcup \lambda_c$, 
and since we are assuming that $\pi$ is of minimal length, 
the path $\pi$ starts with $f$ in some 
$[\lambda_c]$-component $C_i$, possibly goes through $\bigcup \lambda_c$ and 
ends with $f'$ in component $C'$. Recall that $e$ was chosen such that $e \cap \bigcup \lambda_c = \emptyset$. 
Hence, the edges $f_\alpha $ and $f_{\alpha +1}$ cannot be covered by $\bigcup \lambda_c$. 
By our assumption that $\pi$ has minimal length, we can also exclude the case that both 
$f_\alpha $ and $f_{\alpha +1}$ are in $C'$. Hence, at least one of $f_\alpha $ and $f_{\alpha +1}$ must be 
in $C_i$. In other words, $e \cap C_i \neq \emptyset$. Hence, also $e \cap comp_{\low} \neq \emptyset$.
However, by the check on line 31 in Algorithm~\ref{alg:parHDOpt}, we know that
$V(comp_{\low}) \cap \bigcup \lambda_p \subseteq \bigcup \lambda_c$.  
This contradicts the assumption that $e \in \lambda_p$  and $e \cap \bigcup \lambda_c = \emptyset$.
 \end{proof}

\begin{algorithm}
\setstretch{0.92}
\SetInd{0.1em}{0.8em}
\DontPrintSemicolon
\SetKwInOut{KwPara}{Parameter}
\SetKwInput{KwType}{Type}

\KwType{Comp=($E$: Edge set, $\Sp$: Special Edge set)}

\KwIn{$H$: Hypergraph}
\KwPara{$k$: width parameter}
\KwOut{\textbf{true} if \emph{hw} of $H$ $\leq k$, else \textbf{false}}

  \SetKwFunction{algo}{Decomp}  

  \Begin{
  $H_{comp} \coloneqq$ Comp($E$: $H$, $\Sp$: $\emptyset$) \\
    \textbf{return} \algo{$H_{comp}$, $\emptyset$, $H$} \Comment{initial call} 
    }

  \SetKwProg{myalg}{function}{}{}
    \myalg{\algo{$H'\!$: Comp, $Conn$: Vertex set,
    \hspace*{0em}~$A$:~{Edge~set}}} {
     \uIf(\Comment{\textbf{Base Cases}})
              {$|\text{$H'\!\!.E$}| \leq k$ \textbf{\emph{and}} $|H'\!\!.\Sp| = 0$} {
              \textbf{return true} \;
     }
     \uElseIf {$|H'\!\!.E| = 0 $ \textbf{\emph{and}} $|H'\!\!.\Sp| = 1$} {
    \textbf{return true} \;
     }     
     \ElseIf {$|H'\!\!.E| = 0 $ \textbf{\emph{and}} $|H'\!\!.\Sp| > 1$} {
    \textbf{return false} \;
     }

    \ForEach(\Comment{\textbf{ChildLoop}})
              { $\lambda_c \subseteq A $\! { s.t.\,}    $\lambda_c \cap H'.E \neq \emptyset$ \\ \nonl \tabto{2.5cm} $\text{\ \bf and } 1 \leq |\lambda_c| \leq k$}   { 

             $comps_c \coloneqq $ $[\lambda_c ]$-components of $H'$ \;

             \uIf{$\exists i $ s.t. $| comps_c[i] |  >  \frac{|H'|}{2}$} {  
                    \textbf{continue ChildLoop}
             }                  
             \ElseIf(\Comment{check if $\lambda_c$ is root}) 
             { $Conn \subseteq \bigcup  \lambda_c $} { 
                      $\chi_c \coloneqq  \bigcup \lambda_c \cap V(H')  $ \; 
                     \ForEach{$y \in comps_c$} {
                            $Conn_y \coloneqq V(y) \cap \chi_c$ \; 
                            \If {\textbf{not}(\algo{$y$, $Conn_y$, $A$})} {
                                     \textbf{continue ChildLoop} 
                            }
                     }

                \textbf{return true}   \Comment{$c$ is root of $H'$}
             }

             \label{asd} \ForEach(\Comment{\textbf{ParentLoop}})
              { $\lambda_p \subseteq A $\! { s.t.}  $\lambda_p \cap H'.E \neq \emptyset$ \\ \nonl \tabto{2.5cm} $\text{\ \bf and } 1\leq |\lambda_p| \leq k$}  {
                    $comps_p \coloneqq $  $[\lambda_p]$-components of  $H'$  \;

                     \uIf{$\exists i $ s.t. $| comps_p[i] |  >  \frac{|H'\!|}{2}$}    {
                            {$comp_{\low}\coloneqq comps_p[i]$ \Comment{found child comp.}} 
                     }
                     \Else{ 
                            \textbf{continue ParentLoop}
                     }  
                    $\chi_c \coloneqq   \bigcup \lambda_c \cap V(comp_{\low}) $\; 
                    \If{$V(comp_{\low}) \cap Conn \not \subseteq \bigcup \lambda_p $   }  {
                           \textbf{continue ParentLoop} \Comment{connect. check} 
                    }  
                    \If{$V(comp_{\low}) \cap \bigcup\lambda_p \not \subseteq  \chi_c $   }  {
                           \textbf{continue ParentLoop} \Comment{connect. check} 
                    }

                     $new\_comps_c \coloneqq $ $[\chi_c ]$-components of $comp_{\low}$ \;

                     \ForEach{$x \in new\_comps_c$} {
                            $Conn_x \coloneqq V(x) \cap \chi_c$ \;
                            \If {\textbf{not}(\algo{$x$, $Conn_x$, $A$})} {
                                   \textbf{continue ParentLoop}  \Comment{reject parent}
                            }
                     }

                     $comp_{\up} \coloneqq H'\! \setminus comp_{\low} $  \Comment{pointwise diff.} \\
                     $comp_{\up}.\Sp = comp_{\up}.\Sp \cup \{ \chi_c \} $              \;

                     $A_{\up} \coloneqq A \setminus comp_{\low}.E$ \Comment{reducing $A$} \\
                     \If{\textbf{not}(\algo{$comp_{\up}$, $Conn$, $A_{\up} $})} {
                            \textbf{continue ParentLoop} \Comment{reject parent} 
                     }
                     \textbf{return true} \Comment{$\hw$ of $H'\! \leq k$}                 
             }

    }
    \textbf{return false} \Comment{exhausted search space} 
       
}

\caption{Optimised \logk}
\label{alg:parHDOpt}
\end{algorithm}

\section{Additional Experimental Evaluation}
\label{sect:moreEval}
We provide here a number of additional details on our implementation, such as the hybridisation strategy employed in our implementation of \logk, as well as further experiments to highlight various properties of our contribution.

\subsection{Parallel Implementation}
\label{subsect:parallel}

For our experiments, we implemented \logk including all of the
optimisations presented in Section~\ref{sect:improvements}. As
discussed above, a crucial aspect of our algorithm design is that the
use of balanced separators allows us to recursively split the problem
into smaller subproblems. The subproblems are independent of each
other and are therefore processed in parallel by our
implementation. Furthermore, following observations made
in~\cite{DBLP:conf/ijcai/GottlobOP20}, our implementation also executes
the search for balanced separators in parallel by partitioning the
search space effectively.

\subsection{Hybrid Approaches}
\label{sec:hybrid}

While our algorithm has desirable properties for parallelisation (as has been pointed out in Section \ref{subsect:parallel} above), this
comes at the cost of some overhead when compared to simpler methods,
in particular \detk. Especially on small and simple instances the restriction to balanced separators may act as a detriment to performance that outweighs its benefits for parallelisation and its effect on severely restricting the search space.

To balance these considerations overall in practice we therefore also
consider hybrid variants of our implementation. Intuitively, we want to
use \logk as long as the subproblems are still complex, but once they
become simple, we want to switch to an  algorithm that is better suited for those cases. For the
simpler algorithm, \detk is the natural choice as it performs very
well on small instances as was shown in \cite{10.1145/3440015}, where an implementation of \detk is provided as part of \newdetk.
To determine when the switch is made, we implemented two simple metrics to capture the complexity of a hypergraph:
\begin{description}
\item[\edgecount]
  In \edgecount we simply use the number of edges of the hypergraph $|E(H)|$ as the measure of complexity.
\item[\relfill] The \relfill metric is characterised by
  the formula $|E(H)|\frac{k}{\mathit{avg}_{e\in E(H)}|e|}$ where $k$ is the width parameter of the algorithm. The additional factor compared to \edgecount is best understood as two separate additional weightings. Higher width implies more complex structure and hence we expect more complexity per edge. On the other hand, if edges are on average larger, then it becomes easier to find covers and we therefore also inversely weight by the average cardinality of the edges.
\end{description}

We investigated the effectiveness of these metrics through a series of experiments. In particular, we ran experiments with both metrics and different thresholds for when to switch from \logk to \detk. To be precise, for a metric $m$ and threshold $T$,  \logk is executed for a subproblem with hypergraph $H_i$ as long as $m(H_i) \geq T$. If $m(H_i) < T$, we switch to an implementation of \detk written from scratch as part of the code base of \logk. A similar strategy was already proposed in~\cite{DBLP:conf/ijcai/GottlobOP20} in the context of \balgo. However, in that system no metric for the complexity of a subproblem was employed, but rather the switch to \detk was always performed at a fixed recursion depth.

A significant portion of the hypergraphs in HyperBench are relatively
small, so that even the full problem would be too simple for our metrics
to be above reasonable thresholds. In those cases, the execution would
be equivalent to simply running \detk on the instance. We therefore
exclude small hypergraphs from these experiments to obtain more
meaningful results by considering only the \hblarge instances here.

Our results for the experiments on \hblarge are summarised in Table~\ref{tab:hybridNewTest}. Methods \relfill and \edgecount refer to \logk with the respective metric used for hybridisation. The threshold column refers to parameter $T$ in the discussion above. The experiments for all \logk hybrid methods had access to 12 cores, the experiments for \newdetk and \hdsmt used only 1 core each since they do not support parallelism.

\begin{table}[t]
  \caption{Study of two Hybrid methods of \logk on \hblarge, and a comparison with \newdetk and \hdsmt for reference.} 
  \label{tab:hybridNewTest}
  \centering
  \begin{tabular}{l|c|c|c}
    \toprule
  Method & Threshold & Solved  & Av. runtime (sec.)  \\
   \midrule
\relfill & 200 &395 & 92.15 \\
\relfill & 400 & \textbf{411}&  93.53 \\
\relfill & 600 & 410&  \textbf{87.86} \\
\edgecount & 20 & 171 &  130.0\phantom{0} \\
\edgecount & 40 & 219 &  145.09 \\
\edgecount & 80   & 292 &  117.33 \\
  \midrule
  \newdetk~\cite{10.1145/3440015} & - & 174 & 318.93 \\
  \hdsmt~\cite{DBLP:conf/ijcai/SchidlerS21} & - & 277 & 779.39 \\
\bottomrule
  \end{tabular}
\end{table}

Overall, \relfill clearly performs best, especially in the number of solved instances. For thresholds 400 and 600, approximately 90\% of the 465 large instances from \hblarge were solved. This constitutes a significant improvement over the 37\% and 60\% achieved by \newdetk and \hdsmt, respectively. Note that despite solving more instances -- for which \detk and \hdsmt timed out -- the running time is also at least 3 times lower for \relfill. This is surprising as we do not consider timed out instances in our average running time calculations.

One surprising observation from the table is that the differences in performance between different thresholds are much smaller for \relfill than for \edgecount.
Further investigation suggests that this is due to the \relfill metric decreasing much more rapidly as hypergraphs become simpler. 
At the same time, for subproblems that fall in the range between 200-600, the performance of switching to \detk immediately is roughly the same (on average) 
as the performance of continuing with \logk for one or two more steps.

While \edgecount performs worse than \relfill, we can still see a clear improvement over the state of the art methods \newdetk and \hdsmt. Especially the significant improvement over \detk is important to observe as it clearly demonstrates the benefits of our hybrid approach. Recall that when we split our problem into balanced subproblems, each subproblem is then solved independently in parallel. In the hybrid variant we will thus eventually execute our implementation of the \detk algorithm on multiple subproblems in parallel, i.e., we can use an inherently single-threaded algorithm effectively in parallel because we are able to create balanced subproblems. 

\subsection{\hdsmt with 10 Hour Timeout}
As mentioned before, the method used in  \hdsmt fundamentally differs from the search algorithm here. In \hdsmt, the problem is encoded to the SAT modulo theories (SMT) setting and the final solving step is handed off to standard SMT solvers. The encoding in \hdsmt is constructed in such a way that no width parameter is handed to the solver, rather the encoding will always return the optimal width as its solution. This is a significant difference to the parameterised search implemented in \logk (but also previous algorithms such as \detk and \balgo).

This difference makes it naturally difficult to compare running times and the number of optimal solutions directly. To provide a fuller picture we add here Table~\ref{tab:smalleval} to  provide a fuller picture. In the table we present the results from running the same experiments with \hdsmt but with the timeout increased to 10 hours. This increase in timeouts naturally makes the average runtimes difficult to compare to the values in Table~\ref{tab:bigeval}. However, importantly we see that the increase in solved instances is also moderate, and overall \logk still solves significantly more instances than \hdsmt with 10 hours of maximum running time.

\subsection{Analysis on Scaling by Size}

To gain further insight into the performance of each algorithm with respect to solving instances optimally, we investigate the size of solved and unsolved instances. To this end, Figure~\ref{fig:unsolvedComp} provides (logarithmic) scatter plots for each of the three algorithms 
in our tests.
In each plot, each instance is positioned according to its number of vertices and edges. Solved instances for each algorithm are drawn in green while unsolved instances are drawn in red.

The plots show that our intuition holds true in that solving large instances (in both axis) significantly benefits from using \logk. Most of the remaining hypergraphs are either extremely large, containing thousands of edges and vertices, or belong to very specific CSP classes which we know to have very high width (significantly beyond the width 10 limit used in our experiments) through graph-theoretic arguments. %

\definecolor{negcol}{HTML}{EF3340}
\definecolor{poscol}{HTML}{00AB84}

\begin{figure*}[t]
\centering
\begin{subfigure}[b]{0.33\textwidth}
  \centering
\begin{tikzpicture}[transform shape,scale=0.8]
  \begin{axis}[
    xmin=0, 
    xmode=log,
    ymode=log,
    xlabel={Number of edges},
    ylabel={Number of vertices},
  ]
    \addplot+[color=poscol,only marks,mark size=1pt,mark=+] table[x=edges, y=vertices, col sep=comma] 
    {data/solved_newdetk.csv};
    \addplot +[color=negcol,only marks,mark size=1pt,mark=x] table[x=edges, y=vertices, col sep=comma] 
    {data/unsolved_newdetk.csv};
  \end{axis}
\end{tikzpicture}
\caption{\detk}
\end{subfigure}
\hspace{1em}
\begin{subfigure}[b]{0.3\textwidth}
  \centering
  \begin{tikzpicture}[transform shape,scale=0.8]
  \begin{axis}[
    xmin=0, 
    xmode=log,
    ymode=log,
    xlabel={Number of edges},
    yticklabels={,,},
  ]
    \addplot+[color=poscol, only marks,mark size=1pt,mark=+] table[x=Edges, y=Vertices, col sep=comma] 
    {data/solved_htdSMT.csv};
    \addplot +[color=negcol,only marks,mark size=1pt, mark=x] table[x=Edges, y=Vertices, col sep=comma]
    {data/unsolved_htdSMT.csv};
  \end{axis}
\end{tikzpicture}
\caption{\hdsmt}
\end{subfigure}
\hspace{1em}
\begin{subfigure}[b]{0.3\textwidth}
  \centering
  \begin{tikzpicture}[transform shape,scale=0.8]
  \begin{axis}[
    xmin=0, 
    xmode=log,
    ymode=log,
    xlabel={Number of edges},
    yticklabels={,,},
  ]
    \addplot+[color=poscol, mark=+, only marks,mark size=1pt] table[x=Edges, y=Vertices, col sep=comma]
    {data/solved_logk.csv};
    \addplot +[color=negcol, mark=x, only marks,mark size=1pt] table[x=Edges, y=Vertices, col sep=comma]
    {data/unsolved_logk.csv};
  \end{axis}
\end{tikzpicture}
\caption{\logk}
\end{subfigure}

\caption{Comparison of solved instances (green) and unsolved instances (red), relative to their edge and vertex size.}
\label{fig:unsolvedComp}
\end{figure*}
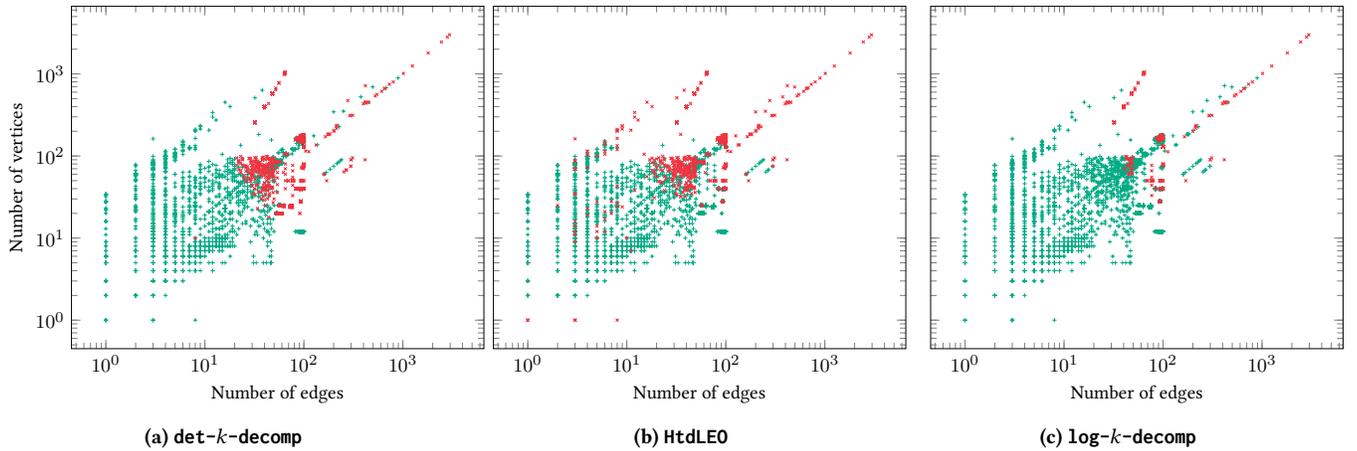

\setlength{\tabcolsep}{2pt}
\begin{table}[t]

  \caption{Comparison of the decomposition methods by how many instances were solved for a specific width. }
  \label{WidthTable}

\newcolumntype{?}{!{\vrule width 1pt}}
\begin{tabular}{c  c ? c c c}
\toprule
Width & Virtual Best & \newdetk & \hdsmt & \logk \\ 
\midrule
1 & 709 & 677 & 649 & \textbf{709} \\ 
2 & 595 & 586 & 567 & \textbf{595} \\ 
3 & 310 & \textbf{310} & 273 & \textbf{310} \\ 
4 & 386 & 379 & 321 & \textbf{386} \\ 
5 & 450 &  38 & 341 & \textbf{450} \\ 
6 & 485 &  28 & 307 & \textbf{480} \\ 
7 & 124 &   9 &  16 & \textbf{108} \\ 
8 & 115 &   1 &  \textbf{69} &  46 \\ 
9 & 19  &   0  &   1 &  \textbf{18} \\
\bottomrule
\end{tabular}

\end{table}

\begin{table*}[t]
  
    \caption{Comparison of the decomposition methods by the upper bounds it could provide. Note that \hdsmt is not being explicitly considered here, since it directly computes the optimal width. Thus it would have the number 2544 --- its number of solved instances -- in each row. }
  \label{Smallwidths}

\newcolumntype{?}{!{\vrule width 1pt}}
\begin{tabular}{c  c ? c c c}
\toprule
Problem to solve & Virtual Best & \logk (Hybrid) & \newdetk & \logk \\ 
\midrule  
$\hw \leq 1$ & 3648 & \textbf{3648} &  \phantom{0}3616\footnotemark & \textbf{3648} \\
$\hw \leq 2$ & 3648 & \textbf{3648} &  3631 & \textbf{3648} \\
$\hw \leq 3$ & 3637 & \textbf{3637} &  3355 & 3567 \\
$\hw \leq 4$ & 3623 & \textbf{3623} & 2391  & 3178 \\
$\hw \leq 5$ & 3616 & \textbf{3611} & 2485  & 2924 \\ 
$\hw \leq 6$ & 3370 & \textbf{3253} & 2897  & 2349 \\

\bottomrule
\end{tabular}

\end{table*}

\begin{table*}
\setlength{\tabcolsep}{4pt}

\newcolumntype{?}{!{\vrule width 1pt}}

    \caption{Extension of Table~\ref{tab:bigeval}, with running times for \hdsmt extended to 10 hours}
    \label{tab:smalleval}
    
\begin{tabular}{c?  c }
\toprule
\begin{tabular}{c c}
\begin{tabular}{r r c}
    \multicolumn{1}{c}{Origin of} & \multicolumn{1}{c}{Size of} &  \multicolumn{1}{c}{Instances in}  \\
    \multicolumn{1}{c}{Instances}  & \multicolumn{1}{c}{Instances} &  \multicolumn{1}{c}{Group} \\
 \midrule

Application & $75< |E| \leq 100$            & 405 \\ 
            & $50< |E| \leq \phantom{0}75 $ & 514 \\   
            & $10< |E| \leq \phantom{0}50 $ & 369 \\ 
            &     $|E| \leq \phantom{0}10 $  & 915 \\ 
\midrule 

Synthetic   & $\phantom{10 < } |E| > 100 $              &  \phantom{0}66 \\  
            &             $75< |E| \leq 100$            & 422 \\ 
            &             $50< |E| \leq \phantom{0}75 $ & 215 \\   
            &             $10< |E| \leq \phantom{0}50 $ & 647 \\ 
            & $\phantom{100<}  |E| \leq \phantom{0}10 $ &  \phantom{0}95 \\ 
\midrule
\multicolumn{1}{c}{Total}   & \multicolumn{1}{c}{-} & 3648 
\end{tabular}

 \end{tabular} &  
 
  \begin{tabular}{r c}

\multicolumn{2}{c}{\hdsmt~\cite{DBLP:conf/ijcai/SchidlerS21} 10 Hour Run} \\  

 \#solved & \multicolumn{1}{c}{Changes 1 hour run}\\ %
\midrule
94 & +\phantom{0}29 \\%

461 & +\phantom{0}13\\ %

237 & $\pm$\phantom{00}0\\ %

876 & $\pm$\phantom{00}0\\ %

 \midrule 

13 & $\pm$\phantom{00}0 \\ %

360 & +\phantom{0}48 \\ %

214 & +\phantom{00}2 \\ %

433 & +130 \\ %

78 & $\pm$\phantom{00}0 \\ %

\midrule

2766 & +222 %

 \end{tabular}

 \\
\bottomrule
\end{tabular}

\end{table*}

\subsection{Analysis on Determining Low Width }

We want to analyse for how many instances of HyperBench, each decomposition method could determine its width and specifically focus on how well it fares as the width increases. Note that \hdsmt is unique in this respect since it determines the optimal width right away. Thus if we ask for how many instances \hdsmt could determine if its width is $\leq 5$, for example, we are really just counting how many timeouts there were in general. For the parametrised decomposition methods, however, this question does give us new insights into how its runtime scales when looking for decompositions of larger or smaller width. 

For this purpose, we first need the concept of the ``Virtual Best'' method. This notion simply aggregates the results of all other methods and shows how for how many instances of HyperBench we know their $\hw$.  We can see in Table~\ref{WidthTable} how each of the three methods fares when compared against this virtual best method. For widths up to 5, the Hybrid \logk is unbeaten, solving all known instances, as well as solving many of them exclusively. 

To provide a more detailed analysis, we also compare for how many instances of $\hw$ up to 6, each method can determine whether an instance has $\hw$ of lower than the given number or not, by finding an HD of such a width or determining that no such HD can exist. Note that this does not require proving optimality.
We can see the results in Table~\ref{Smallwidths}. We can see that both \logk and \logk (Hybrid) are very good at this, with the Hybrid determining for 3253 (or almost 90\% of) instances whether they have $\hw \leq 6$. If we limit ourselves to $\hw \leq 5$, it determines the question for 3611 or almost 99\% of instances.

\footnotetext{We note that the reason \newdetk fails to determine acyclicity (i.e. whether $\hw \leq 1$) for all graphs is due to a bug where it will output a HD of width 2 instead of one of width 1 when given certain acyclic graphs. While of low practical interest, as determining acyclicity is a trivial problem, it is still the case that \newdetk in its current form fails to determine all acyclic graphs of HyperBench correctly.}

\end{document}